\journal{***}
\theoremstyle{thmstyleone}%
\newtheorem{theorem}{Theorem}
\theoremstyle{thmstyletwo}%
\newtheorem{remark}{Remark}%
\theoremstyle{thmstylethree}%
\begin{document}
	
	\begin{frontmatter}
		
		
		
		\title{A tumor-immune model of chronic myeloid leukemia with optimal immunotherapeutic protocols}
		
		
		\author[First]{Haifeng Zhang}
		\ead{haifengzhang202212@163.com}
		
		\author[Second]{Changjing Zhuge}
		\ead{zhuge@bjut.edu.cn}
		
		\author[Third]{Jinzhi Lei\corref{cor1}}
		\ead{jzlei@tiangong.edu.cn}
		
		\cortext[cor1]{Corresponding author.}

		\address[First]{School of Mathematical Sciences, Jiangsu University, Zhenjiang, 212013, P. R. China.}
		\address[Second]{Beijing Institute for Scientific and Engineering Computing, Beijing University of Technology,  Beijing, 100124, P. R. China.}
		\address[Third]{School of Mathematical Sciences, Center for Applied Mathematics, Tiangong University, Tianjin, 300387, P. R. China.}
		
\begin{abstract}
The interactions between tumor cells and the immune system play a crucial role in cancer evolution. In this study, we explore how these interactions influence cancer progression by modeling the relationships among naive T cells, effector T cells, and chronic myeloid leukemia cells. We examine the existence of equilibria, the asymptotic stability of the positive steady state, and the global stability of the tumor-free equilibrium. Additionally, we develop a partial differential equation to describe the conditions under which the concentration of cancer cells reaches a level that allows for effective control of cancer evolution. Finally, we apply our proposed model to investigate optimal treatment strategies that aim to minimize both the concentration of cancer cells at the end of treatment and the accumulation of tumor burden, as well as the cost associated with treatment during the intervention period. Our study reveals an optimal therapeutic protocol using optimal control theory. We perform numerical simulations to illustrate our theoretical results and to explore the dynamic behavior of the system and optimal therapeutic protocols. The simulations indicate that the optimal treatment strategy can be more effective than a constant treatment approach, even when applying the same treatment interval and total drug input.
\end{abstract}

\begin{keyword}
	Mathematical model \sep cancer immunotherapy optimal control \sep Pontryagin's minimum principle
\MSC[2020] 34C60 \sep 34D20 \sep 93C95
	
	
\end{keyword}

\end{frontmatter}

\section{Introduction}
Leukemia refers to a group of blood cancers marked by a significant increase in abnormal blood cells, which pose serious risks to human health. Challenges remain while considerable progress has been made in controlling and eliminating leukemia cells. Cancer immunotherapy, which utilizes the immune system to combat cancer, has become an important approach to cancer treatment. Research has increasingly demonstrated that the immune system plays a crucial role in the development of myeloid leukemia \citep{Vago2020,Sawyers1999,Riether2015,Riether2022}, and that immune deficiencies can increase susceptibility to carcinogens \citep{Vesely2011}. Understanding how the immune system influences cancer development and progression presents one of the most significant and complex questions in immunology \citep{Schreiber2011}. In this study, we focus on chronic myeloid leukemia (CML), a specific type of leukemia characterized by the presence of Philadelphia chromosome, and we explore optimal immunotherapeutic protocols through a mathematical model.

To gain a better understanding of how the immune system influences cancer development, numerous mathematical models have been developed to explore the interactions between tumors and the immune system \citep{Ruan2021,Eftimie2011,Moore2004,Talkington2018,Liu2017,Bi2014,Clapp2015,Besse2018,Yang2015,Li2022,Cho2020,Lai2020,Shi2021,Han2020}. Kuznetsov et al. \citep{Kuznetsov1994} proposed a model that describes the interactions between effector cells and tumor cells. The dynamic behavior of this model includes several phenomena: the immunostimulation of tumor growth, the tumor's ability to ``sneak through" immune defenses, and the formation of a dormant tumor state. Kirschner and Panetta \citep{Kirschner1998} developed a mathematical model for tumor-immune dynamics that explains short-term oscillations in tumor size and long-term tumor relapse, further applying the model to explore the effects of immunotherapy. Dong et al. \citep{Dong2014} utilized a bilinear ordinary differential equation system to report that helper T cells are critical for sustaining long-term periodic oscillation in tumor-immune interactions. Shi et al. \citep{Shi2023} employed a tumor-immune system model to investigate the conditions leading to bistability and the long-term coexistence of tumor and T cells under anti-PD-1 treatment. Li et al. \citep{Li2021} proposed and analyzed a conceptual model that includes both tumor and immune cells, demonstrating that antigenicity significantly controls tumor development. d'Onofrion \citep{dOnofrio2005} studied a general family of models for tumor-immune interactions and examined cancer evolution dynamics following various treatment protocols. Ledzewicz and Schättler \citep{Ledzewicz2023} focused on a dynamic model combining immunotherapy with radiotherapy, investigating three medically realistic scenarios of immunoediting: elimination, equilibrium, and tumor escape. Ledzewicz and Moore \citep{Ledzewicz2016} analyzed a mathematical model for the treatment of chronic myeloid leukemia, which includes two types of leukemia cells and variable immune response strengths. They applied theoretical and numerical methods to explore the long-term dynamic of constant drug concentrations. In \citep{Fassoni2019}, a comprehensive mathematical model encompassing 20 different submodels was proposed to describe the interactions between immune effector cells and leukemic cells. The authors systematically compared the behaviors of each submodel based on data from TKI cessation studies. In addition, Fassoni et al. \citep{Fassoni2024} considered a specific model from \citep{Fassoni2019} and demonstrated the existence of limit cycles, which biologically align with the fluctuations of residual leukemic cell populations. Kim et al. \citep{Kim2008} combined experimental data with a delay differential model, reporting that the immune response to leukemia may help maintain remission in patients with chronic myeloid leukemia. Other studies have also addressed tumor-immune interactions in chronic myeloid leukemia \citep{Hahnel2020,Doumic-Jauffret2010,Ainseba2011,Besse2017,Karg2022,Paquin2011,Berezansky2015}.

Optimal control theory has been applied to tackle cancer treatment issues using tumor-immune models. Castiglione and Piccoli \citep{Castiglione2007} employed this theory to analyze the effect of immunotherapy. Meanwhile, de Pillis and Radunskaya \citep{dePillis2001} utilized optimal control theory to determine schedules for chemotherapy administration, resulting in total tumor size at the designated final time that is smaller than that achieved with traditional pulsed therapy. Building on the tumor–immune model introduced by Kirschner and Panetta \citep{Kirschner1998}, optimal control theory has been applied to optimize immunotherapy \citep{Burden2004,Ghaffari2010,SarvAhrabi2020}. To establish optimal therapeutic protocols, Burden et al. \citep{Burden2004} aimed to maximize the number of effector cells and interleukin-2, while simultaneously minimizing both the number of tumor cells and the cost of treatment. In later studies \citep{Ghaffari2010,SarvAhrabi2020}, a linear penalty was added to the objective functional compared to the one used in \citep{Burden2004}. Ghaffari and Naserifar \citep{Ghaffari2010} demonstrated that cancerous cells can be effectively eliminated in a shorter time frame than reported in \citep{Burden2004}. Additionally, Sarv Ahrabi and Momenzadeh \citep{SarvAhrabi2020} compared the outcomes obtained from three different numerical methods for solving optimal control problems. de Pillis et al. \citep{dePillis2007} explored a tumor-immune model that integrates chemotherapy and compared two distinct optimal control strategies: quadratic control and linear control. Their findings indicated that both quadratic and linear controls produced similar tumor evolution dynamics regarding chemical drug administration. Multiple studies have investigated different optimal control problems based on a model of immunological activity during cancer growth to identify the best treatment strategies \citep{Ledzewicz2012,Ledzewicz2013}. Utilizing the model from \citep{Ledzewicz2016}, Moore et al. \citep{Moore2018} applied control theory to optimize combined therapies and various approaches to determine superior treatment regimens. Furthermore, Ledzewicz and Moore \citep{Ledzewicz2018} found that both the original and extended models' optimal control problems \citep{Ledzewicz2016,Moore2018} yield qualitatively and quantitatively similar solutions under appropriate conditions. Additionally, A\"{i}nseba and Benosman \citep{Ainseba2010} explored therapeutic effects through five different therapeutic scenarios using the optimal control method to minimize the number of cancer cells and the cost of treatment. Their study revealed that therapies causing mortality in
cancer hematopoietic stem cells are effective. Based on the model proposed by \citep{Berezansky2015}, Bunimovich-Mendrazitsky and Shklyar \citep{Bunimovich-Mendrazitsky2017} investigated the optimal scheduling of combination treatments with imatinib and interferon-$\alpha$ using Pontryagin's minimum principle. Their research highlighted that the duration of imatinib administration is a critical factor for successful therapy.

Moore and Li \citep{Moore2004} proposed an ordinary differential equation model to investigate the interaction among naive T cells, effector T cells, and cancer cells. They reported that cancer cells' growth and death rates play critical roles in influencing cancer evolution.
The model has been further examined in many other studies \citep{Nanda2007,Dimitriu2019,Valle2021,Moore2018,Krishchenko2016}. Building on the established model, Nanda et al. \citep{Nanda2007} explored optimal combination treatment regimens involving targeted therapy and broad cytotoxic therapy under various assumptions. These results indicated that combination therapy is more effective than monotherapy. Dimitriu \citep{Dimitriu2019} conducted a global sensitivity analysis related to parameters in the leukemia model, revealing that the growth rate of chronic myeloid leukemia (CML) cells is the most sensitive parameter when CML cells are increasing. Valle et al. \citep{Valle2021} investigated personalized immunotherapy strategies by introducing a term representing the infused enhanced T lymphocytes. They formulated a therapeutic protocol aimed at eliminating the cancer cell population. Moore \citep{Moore20181} examined optimal control problems to optimize drug doses in a combination regimen, finding that the total levels of the two drug types decrease as the cancer cell death rate rises. Krishchenko and Starkov \citep{Krishchenko2016} provided upper bounds for three cell populations and determined the existence conditions of a positively invariant polytope. They also studied the globally asymptotic stability of tumor equilibrium due to the cyclic application of localizing functions. In 2018, Talkington et al. \citep{Talkington2018} modified Moore and Li's model \citep{Moore2004} to consider the logistic growth of tumor cells. They analyzed both the existence and locally asymptotic stability of tumor-free equilibrium, as well as the existence and locally asymptotic stability of positive equilibrium with a substantial number of tumor cells. Studying the interactions between tumor cells and T cells is crucial in oncology. Therefore, further research on the interesting phenomena of this model can provide valuable insights for tumor control.

In this study, we further investigate the tumor-immune model established by Talkington et al. \citep{Talkington2018} to assess the effect of immunotherapy. We conduct mathematical analysis and numerical simulations to explore the existence of equilibria and the global stability of the tumor-free equilibrium. A partial differential equation is established to describe the time required for the number of tumor cells to reach a specified level. Additionally, based on the proposed model, we examine optimal immunotherapeutic strategies. The model is presented in Section \ref{sec:2}. In Section \ref{sec:3}, we conduct theoretical investigations to analyze the positivity and boundaries of the solutions of the model equation. We also explore the existence of equilibria and the stability of the tumor-free equilibrium. Furthermore, we determine the first time point cancer cells reach a specified value and provide an upper bound for this time. Numerical analyses are performed to investigate the dynamics of the model. To optimally allocate the T cells administered during treatment, we propose and analyze an optimal control problem. We conduct numerical simulations to derive optimal immunotherapeutic protocols. Finally, we present our conclusions in Section \ref{sec:4}.

\section{Model formulation}
\label{sec:2}
This section introduces the model study in this paper.

An ordinary differential equation model has been established to describe the interaction among naive T cells, effector T cells, and CML cells \citep{Moore2004}.
Let $T$, $E$, and $N$ represent the populations of CML cells, effector, and naive T cells, respectively. The system of differential equations is given as:
\begin{equation}
\label{eq:model_0}
\left\{
\begin{aligned}
\dfrac{{\rm d} T}{{\rm d} t}& =  r_c T \ln \Big(  \dfrac{T_{\max} }{T} \Big) - d_c T - \gamma_c T E,\\
\dfrac{{\rm d} E}{{\rm d} t}& =  \alpha_n k_n N \Big(  \dfrac{T}{ T + g } \Big) + \alpha_e E \Big(  \dfrac{T}{ T + g } \Big) - d_e E - \gamma_e T E,   \\
\dfrac{{\rm d} N}{{\rm d} t}& =  s_n - d_n N - k_n N \Big(  \dfrac{T}{ T + \eta } \Big).
\end{aligned}
\right.
\end{equation}
In 2018, Talkington et al. \citep{Talkington2018} modified equation \eqref{eq:model_0} by replacing the Gompertzian growth of tumor cells with logistic growth, simplified the regulation of tumor cells in producing effector T cells, and obtained the following equation: 
\begin{equation}
\label{eq:model_1}
\left\{
\begin{aligned}
\dfrac{{\rm d} T}{{\rm d} t}& =  a T ( 1 - b T) - \gamma_c E T,\\
\dfrac{{\rm d} E}{{\rm d} t}& =  s_e - d_e E +
\alpha_n k_n N \Big(  \dfrac{T}{ T + g } \Big)  - \gamma_e E T,   \\
\dfrac{{\rm d} N}{{\rm d} t}& =  s_n - d_n N - k_n N \Big(  \dfrac{T}{ T + g } \Big).
\end{aligned}
\right.
\end{equation} 
Here, $a$ denotes the tumor growth rate; $b$ represents the inverse of the tumor carrying capacity; $\gamma_c$ is the rate of tumor cell loss due to encounters with effector T cells; $s_e$ is the rate at which effector T cells enter the bloodstream; $d_e$ is the death rate of effector T cells; $k_n$ is the activation rate of naive T cells, such that the Michaelis–Menten term $ k_n N \Big(  \dfrac{T}{ T + g } \Big)$ represents the instantaneous loss rate of naive T cells when they contact myeloid leukemia antigen; $\alpha_n$ is the activation coefficient of naive T cells encountering an antigen-presenting cell (APC), which then proliferate and divide into effector T cells; $g$ is the half-saturation coefficient; $\gamma_e$ is the rate constant for the loss of effector T cells due to encounters with myeloid leukemia cells; $s_n$ is the rate at which naive T cells enter the bloodstream; and $d_n$ is the death rate constant of naive T cells. All parameters are assumed to be positive unless otherwise specified. Given the biological context, we choose nonnegative initial conditions. Therefore,  based on the standard theory of ordinary differential equations, equation \eqref{eq:model_1} yields a unique solution for any nonnegative initial conditions. 

\section{Results}
\label{sec:3}

Equation \eqref{eq:model_1} presents the model in this study. We begin by performing a mathematical analysis of the dynamics of the model system \eqref{eq:model_1}. Next, we discuss the biological significance of the analytical and numerical simulation results. Finally, we consider the optimal immunotherapeutic protocols.

\subsection{Positivity and boundary of solutions}
In this subsection, we show that the solutions of equation \eqref{eq:model_1} are positive and bounded over time when the initial conditions satisfy $T(0) > 0$, $E(0) \geq 0$ and $N(0) \geq 0$.

\begin{theorem}
	\label{thm2.1}
	The solutions of equation \eqref{eq:model_1} are positive for $t>0$ when the initial conditions satisfy $T(0) > 0$, $E(0) \geq 0$ and $N(0) \geq 0$.
\end{theorem}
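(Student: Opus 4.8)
The plan is to establish positivity component by component, exploiting the (quasi-)triangular coupling of the three equations and representing each scalar equation through an integrating factor. Throughout, I would work on the maximal interval of existence $[0,t_{\max})$, on which the (locally Lipschitz) right-hand side of \eqref{eq:model_1} guarantees a unique $C^1$ solution and hence continuity of every coefficient that appears. \textbf{Step 1 (positivity of $T$).} Rewrite the first equation as $\dot T = T\big(a(1-bT)-\gamma_c E\big)$, a linear equation in $T$ with continuous coefficient $\phi(t):=a(1-bT(t))-\gamma_c E(t)$. Then $T(t)=T(0)\exp\!\big(\int_0^t\phi(s)\,{\rm d}s\big)$, which is strictly positive on $[0,t_{\max})$ since $T(0)>0$ and the exponential never vanishes. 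In particular $T(t)+g>0$ and $0<\tfrac{T(t)}{T(t)+g}<1$, so the Michaelis--Menten terms in the other two equations are well defined and bounded.

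\textbf{Step 2 (positivity of $N$).} By Step~1 the third equation reads $\dot N = s_n-\psi(t)N$ with $\psi(t):=d_n+k_n\tfrac{T(t)}{T(t)+g}\ge 0$ continuous; multiplying by $e^{\int_0^t\psi}$ and integrating gives $N(t)=N(0)e^{-\int_0^t\psi}+\int_0^t s_n\,e^{-\int_r^t\psi}\,{\rm d}r$, and the second term is strictly positive for $t>0$ because $s_n>0$, while the first is nonnegative because $N(0)\ge 0$; hence $N(t)>0$ on $(0,t_{\max})$. \textbf{Step 3 (positivity of $E$).} Using Steps~1--2, the source term $\alpha_n k_n N\tfrac{T}{T+g}\ge 0$, so the second equation satisfies the differential inequality $\dot E \ge s_e-\chi(t)E$ with $\chi(t):=d_e+\gamma_e T(t)\ge 0$ continuous. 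The same integrating-factor computation, now as an inequality, yields $E(t)\ge E(0)e^{-\int_0^t\chi}+\int_0^t s_e\,e^{-\int_r^t\chi}\,{\rm d}r>0$ for $t>0$, again since $E(0)\ge 0$ and $s_e>0$.

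There is no substantive obstacle here; the computations are routine. The only points needing care are the \emph{ordering} of the three steps --- $T$ must be treated first so that $\tfrac{T}{T+g}$ is a legitimate continuous coefficient, and $N$ before $E$ so that the source term in the $E$-equation is known to be nonnegative --- and the observation that the hypothesis $T(0)>0$ (rather than merely $T(0)\ge 0$) is exactly what is required, since the hyperplane $\{T=0\}$ is invariant for \eqref{eq:model_1}. I would close by remarking that this argument holds on the whole maximal interval of existence; that $t_{\max}=\infty$ will follow from the a priori bounds derived in the remainder of this section. Equivalently, one may phrase Steps~2--3 as the statement that the vector field points strictly inward on the faces $\{E=0\}$ and $\{N=0\}$ of the nonnegative orthant, so that orthant is forward invariant, and strict positivity of $E$ and $N$ for $t>0$ is immediate.
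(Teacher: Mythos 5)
Your proposal is correct and follows essentially the same route as the paper's proof: the exponential (integrating-factor) representation for $T$, variation of constants for $N$, and a differential inequality combined with variation of constants for $E$, treated in the same order $T \to N \to E$. The additional remarks about the maximal interval of existence and the invariance of the faces of the nonnegative orthant are sound but not needed beyond what the paper already does.
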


\begin{proof} Let $(T(t), E(t), N(t))$ denote the solution of equation \eqref{eq:model_1} with initial condition $(T(0), E(0), N(0))$.
	
From the first equation of \eqref{eq:model_1}, we have
\begin{equation}
\label{eq:cancer_1}
T(t) = T(0) {\mathrm e}^{ \int_{ 0}^{t} \big( a ( 1- bT(s)) - \gamma_c E(s)  \big) {\rm d} s},
\end{equation}
which implies that $T(t) > 0$ for $t \geq 0$. By the method of variation of constant, the third equation of \eqref{eq:model_1} yields
\begin{equation*}
N(t) = {\mathrm e}^{ - \int_{ 0}^{t} h(s) {\rm d} s} \left( N(0)  + s_n  \int_{0}^{t} {\mathrm e}^{ \int_{ 0}^{s_1} h(s) {\rm d} s} {\rm d} s_1 \right),\ \mathrm{where}\ h(s) = d_n + k_n \dfrac{T(s)}{ T(s) + g}.
\end{equation*}
Thus, it is easy to have $N(t) > 0$ for all $t > 0$. 
	
Finally, based on the above arguments, the second equation of \eqref{eq:model_1} implies
\begin{equation*}
\dfrac{ {\rm d} E}{ {\rm d} t} \geq s_e - d_e E - \gamma_e E T.
\end{equation*}
Thus, 
\begin{equation*}
E(t) \geq {\mathrm e}^{ - \int_{ 0}^{t} \big( d_e + \gamma_e T(s) \big) {\rm d} s} \left( E(0)  + s_e  \int_{0}^{t} {\mathrm e}^{ \int_{ 0}^{s_1} \big( d_e + \gamma_e T(s)   \big) {\rm d} s} {\rm d} s_1 \right).
\end{equation*}
Hence, $E(t) > 0$ for $t > 0$. This completes the proof.
\end{proof}

\begin{remark}
\label{rem:1}
Similar to the proof of Theorem \ref{thm2.1}, we can conclude that the solutions of equation \eqref{eq:model_1} satisfy $T(t) = 0$, $E(t) > 0$ and $N(t) > 0$ for $t >0$ when the initial conditions $T(0) = 0$, $E(0) \geq 0$ and $N(0) \geq 0$. 
\end{remark}

Now, we consider the boundary of the solutions.

\begin{theorem}
\label{thm2.1_b}
The solutions of equation \eqref{eq:model_1} have the ultimate upper boundary in $\mathbb{R}^3$ when the initial conditions satisfy $T(0) \geq 0$, $E(0) \geq 0$ and $N(0) \geq 0$.
\end{theorem}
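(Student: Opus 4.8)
The plan is to exploit the (upper-)triangular structure of the system: the $T$-equation is dominated by a logistic equation that does not involve $E$ or $N$, the $N$-equation is dominated by a linear equation that does not involve $T$ or $E$, and once $N$ has been controlled the $E$-equation is dominated by a linear equation with bounded forcing. All three comparisons simply discard the nonpositive interaction terms $-\gamma_c E T$, $-\gamma_e E T$ and $-k_n N \,T/(T+g)$, which only accelerate decay. I would phrase each step as a differential-inequality/comparison argument rather than manipulating explicit integral representations as in the proof of Theorem \ref{thm2.1}.

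First I would bound $N$. From the third equation of \eqref{eq:model_1}, $N'(t) \le s_n - d_n N(t)$ for all $t \ge 0$, since $T(t), N(t) \ge 0$ by Theorem \ref{thm2.1} and Remark \ref{rem:1}. Comparison with $\dot y = s_n - d_n y$ then gives $N(t) \le \max\{N(0),\, s_n/d_n\} =: N_\infty$ for all $t \ge 0$, and in particular $\limsup_{t\to\infty} N(t) \le s_n/d_n$. Next, from the first equation, $T'(t) \le a T(t)\,(1 - b T(t))$ because $E(t) \ge 0$; comparison with the logistic equation yields $T(t) \le \max\{T(0),\, 1/b\}$ for all $t \ge 0$ and $\limsup_{t\to\infty} T(t) \le 1/b$. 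Finally, for $E$: using $0 \le T/(T+g) \le 1$ together with the bound $N(t) \le N_\infty$ just obtained, the second equation gives $E'(t) \le s_e + \alpha_n k_n N_\infty - d_e E(t)$, whence $E(t) \le \max\{E(0),\, (s_e + \alpha_n k_n N_\infty)/d_e\}$ for all $t \ge 0$, and $\limsup_{t\to\infty} E(t) \le (s_e + \alpha_n k_n s_n/d_n)/d_e$. Combining the three estimates yields the desired ultimate upper bound for the solution in $\mathbb{R}^3$ (and, as a byproduct, an explicit forward-invariant box).

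There is essentially no serious obstacle here; the only points requiring care are (i) invoking a genuine comparison principle for scalar differential inequalities to justify each $\limsup$ bound, and (ii) respecting the order of the arguments — the bound on $E$ cannot be established before $N$ is bounded, since the forcing term $\alpha_n k_n N\,T/(T+g)$ in the $E$-equation depends on $N$, while the $T$- and $N$-bounds are genuinely independent of the other variables.
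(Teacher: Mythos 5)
Your proposal is correct and follows essentially the same route as the paper: bound $T$ by logistic comparison, $N$ by linear comparison, and then $E$ by a linear comparison with forcing controlled by the $N$-bound (the paper makes the "eventually $N \le s_n/d_n + \epsilon$" step explicit before letting $\epsilon \to 0$, which is the small detail your $\limsup$ bound on $E$ implicitly relies on). The only cosmetic difference is that you discard the factor $T/(T+g) \le 1$, yielding the slightly cruder ultimate bound $(s_e + \alpha_n k_n s_n/d_n)/d_e$ in place of the paper's $\frac{1}{d_e}\bigl(s_e + \frac{\alpha_n k_n s_n}{d_n(1+bg)}\bigr)$, while gaining an explicit forward-invariant box.
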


\begin{proof}
From the first equation of \eqref{eq:model_1}, we have $\dfrac{{\rm d} T}{{\rm d} t} \leq a T(1- b T)$, which implies
\begin{equation}
\label{eq:T_ub}
\limsup_{\substack t \rightarrow \infty} T(t) \leq \dfrac{1}{b}.
\end{equation}
The third equation of \eqref{eq:model_1} implies $\dfrac{{\rm d} N}{{\rm d} t} \leq s_n - d_n N$. Hence, 
\begin{equation*}
\limsup_{\substack t \rightarrow \infty} N(t) \leq \dfrac{s_n}{ d_n}.
\end{equation*}
Thus, for any $\epsilon > 0$, there exists a finite time $t_1$ such that the solutions of system \eqref{eq:model_1} satisfy $ T \leq \dfrac{1}{b} + \epsilon$ and $N(t) \leq \dfrac{s_n}{ d_n} + \epsilon$ for $t > t_1$. 
	
From the second equation of \eqref{eq:model_1}, we have $\dfrac{{\rm d} E}{{\rm d} t} \leq s_e - d_e E + \alpha_n k_n (\dfrac{s_n}{ d_n} + \epsilon) \dfrac{ 1/b + \epsilon}{ 1/b + \epsilon + g}$ when $t > t_1$, which gives
\begin{equation*}
\limsup_{\substack t \rightarrow \infty} E(t) \leq \dfrac{1}{ d_e} \Big( s_e + \alpha_n k_n (\dfrac{s_n}{ d_n} + \epsilon) \dfrac{ 1 + b \epsilon}{ 1 + b \epsilon + b g} \Big).
\end{equation*}
Since $\epsilon$ is arbitrarily small, we obtain 
\begin{equation*}
\limsup_{\substack t \rightarrow \infty} E(t) \leq \dfrac{1}{ d_e} \Big( s_e + \dfrac{\alpha_n k_n s_n }{ d_n (1 +  b g) } \Big)
\end{equation*} 
by letting $\epsilon$ approach to $0$. This completes the proof.
\end{proof}

To determine the asymptotic behavior of cell populations around an equilibrium $S^* = (T^*, E^*, N^*)$, we linearize the equation \eqref{eq:model_1} near the equilibrium and obtain the Jacobian matrix
\begin{equation}
\label{eq:j1}
J = \left.\left( 
\begin{array}{ccc}
a - 2 abT - \gamma_c E    & - \gamma_c T  & 0 \\
\dfrac{g \alpha_n k_n N }{ (T + g)^2 } -  \gamma_e E  &  -d_e -  \gamma_e T  & \dfrac{ \alpha_n k_n T }{ T + g } \\
-  \dfrac{g k_n N }{ (T + g)^2 }  & 0  & -d_n - \dfrac{ k_n T }{ T + g }
\end{array}
\right)\right\vert_{(T^*, E^*, N^*)}.
\end{equation}
The equilibrium $S^*$ is asymptotically stable when all eigenvalues of the Jacobian matrix $J$ have negative real parts.

It is easy to have the tumor-free equilibrium $S_1^* = (0, \dfrac{s_e}{d_e}, \dfrac{s_n}{d_n})$ for  equation \eqref{eq:model_1}. Thus, according to Theorem 3 of \cite{Talkington2018} and the Jacobian matrix \eqref{eq:j1}, we have the conclusion below.
\begin{theorem}	
\label{thm2.a}
The tumor-free equilibrium $S_1^*$ is locally asymptotically stable when $a < \dfrac{ \gamma_c s_e}{d_e}$.
\end{theorem}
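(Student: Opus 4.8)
The plan is to invoke the standard linearized-stability criterion stated just before the theorem: it suffices to show that, under the hypothesis $a < \gamma_c s_e / d_e$, every eigenvalue of the Jacobian matrix $J$ from \eqref{eq:j1}, evaluated at $S_1^* = (0, s_e/d_e, s_n/d_n)$, has negative real part. (This is of course consistent with Theorem 3 of \cite{Talkington2018}, which the statement cites; the point here is to extract it directly from \eqref{eq:j1}.)

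The first step is simply to substitute $T^* = 0$, $E^* = s_e/d_e$, $N^* = s_n/d_n$ into \eqref{eq:j1}. The key structural observation is that every entry of $J$ carrying a factor of $T$ — the $(1,2)$ entry $-\gamma_c T$, the $(2,3)$ entry $\alpha_n k_n T/(T+g)$, and the $T$-dependent pieces of the diagonal entries — vanishes at $T^* = 0$. Hence the evaluated Jacobian is lower triangular, so its spectrum can be read straight off the diagonal: the eigenvalues are $a - \gamma_c s_e/d_e$, $-d_e$, and $-d_n$. The second step is then immediate: $-d_e < 0$ and $-d_n < 0$ since all parameters are positive, and $a - \gamma_c s_e/d_e < 0$ precisely under the stated hypothesis $a < \gamma_c s_e/d_e$. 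Thus all three eigenvalues are real and negative, and local asymptotic stability of $S_1^*$ follows.

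There is essentially no analytical obstacle; the only care needed is the bookkeeping in confirming that the super-diagonal entries genuinely vanish at $T^* = 0$, which is what collapses the eigenvalue problem to inspecting the diagonal. I would also remark, for completeness, that the borderline case $a = \gamma_c s_e/d_e$ yields a zero eigenvalue and so is not covered by linearization, which is why the theorem states the strict inequality.
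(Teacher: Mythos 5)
Your proof is correct and follows essentially the same route as the paper, which simply cites Theorem 3 of Talkington et al.\ together with the Jacobian \eqref{eq:j1}; the implied argument is exactly your substitution $T^*=0$, making $J$ lower triangular with eigenvalues $a-\gamma_c s_e/d_e$, $-d_e$, $-d_n$. Your version is in fact more explicit than the paper's, which leaves this computation to the cited reference.
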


The following theorem gives the existence of positive equilibria.
\begin{theorem}
\label{thm2.b}
Consider the system \eqref{eq:model_1}, if $a > \dfrac{ \gamma_c s_e}{d_e}$, there exists at least one positive steady state $S_2^* = (T^*, E^*, N^*)$. Furthermore, denote 
\begin{eqnarray*}
\beta_1 &=& \dfrac{d_n g s_e}{s_e (d_n +k_n)+ \alpha_n k_n s_n },\\
\beta_2 &=& \dfrac{d_e(d_n +k_n) + \gamma_e d_n g}{\gamma_e (d_n +k_n) },\\
\beta_3 &=& \dfrac{d_e d_n g}{ \gamma_e (d_n +k_n) },\\
\beta_4 &=&  \frac{\gamma_e (d_n +k_n) }{s_e (d_n +k_n)+ \alpha_n k_n s_n},
\end{eqnarray*}
and let
$$\Delta^* = \max_{0\leq T^*\leq 1/b} \dfrac{ (T^*+ \beta_1)^2 - ( \beta_1^2 + \beta_3 - \beta_1 \beta_2)}{ ({T^*}^2 + \beta_2 T^* + \beta_3)^2},$$	 
the system \eqref{eq:model_1} has a unique positive equilibrium if 
\begin{equation}
\label{eq:psd}
\Delta^* \leq \beta_4 \frac{a b}{\gamma_c}.
\end{equation}
\end{theorem}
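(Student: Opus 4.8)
The plan is to collapse the three equilibrium equations of \eqref{eq:model_1} into a single scalar equation in the tumor coordinate $T$ and then analyse that equation. At a positive equilibrium $(T^*,E^*,N^*)$ the first equation forces $a(1-bT^*)=\gamma_c E^*>0$, so necessarily $T^*\in(0,1/b)$ and $E^*=\frac{a(1-bT^*)}{\gamma_c}$; the third equation forces $N^*=\frac{s_n(T^*+g)}{(d_n+k_n)T^*+d_ng}$, whence $N^*\frac{T^*}{T^*+g}=\frac{s_nT^*}{(d_n+k_n)T^*+d_ng}$. Substituting both expressions into the second equation and writing $A:=s_e(d_n+k_n)+\alpha_n k_n s_n$, the condition $\frac{\mathrm dE}{\mathrm dt}=0$ becomes
\[
\frac{a(1-bT^*)}{\gamma_c}\,(d_e+\gamma_e T^*)=\frac{A\,(T^*+\beta_1)}{(d_n+k_n)T^*+d_ng}.
\]
I would then factor $d_e+\gamma_e T=\gamma_e\bigl(T+\tfrac{d_e}{\gamma_e}\bigr)$ and $(d_n+k_n)T+d_ng=(d_n+k_n)\bigl(T+\tfrac{d_ng}{d_n+k_n}\bigr)$ and use the identity $T^2+\beta_2T+\beta_3=\bigl(T+\tfrac{d_e}{\gamma_e}\bigr)\bigl(T+\tfrac{d_ng}{d_n+k_n}\bigr)$ (which is precisely how $\beta_2,\beta_3$ were chosen) together with $\beta_4=\tfrac{\gamma_e(d_n+k_n)}{A}$; the displayed equation then simplifies to
\[
\frac{a\beta_4}{\gamma_c}(1-bT^*)\bigl(T^{*2}+\beta_2T^*+\beta_3\bigr)=T^*+\beta_1 .
\]
Since $T^2+\beta_2T+\beta_3=\bigl(T+\tfrac{d_e}{\gamma_e}\bigr)\bigl(T+\tfrac{d_ng}{d_n+k_n}\bigr)>0$ on $[0,1/b]$, this is equivalent to $h(T^*)=0$, where
\[
h(T):=\frac{a\beta_4}{\gamma_c}(1-bT)-\frac{T+\beta_1}{T^2+\beta_2T+\beta_3},\qquad T\in[0,1/b],
\]
and, conversely, any zero of $h$ in $(0,1/b)$ yields a positive equilibrium through the formulas above. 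Thus the positive equilibria of \eqref{eq:model_1} are in bijection with the zeros of $h$ in $(0,1/b)$.

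For existence I would only examine the endpoints. Using $\tfrac{\beta_1}{\beta_3}=\tfrac{s_e}{d_e}\beta_4$ one gets $h(0)=\beta_4\bigl(\tfrac{a}{\gamma_c}-\tfrac{s_e}{d_e}\bigr)$, which is positive exactly because $a>\tfrac{\gamma_c s_e}{d_e}$; and $h(1/b)=-\frac{1/b+\beta_1}{1/b^{2}+\beta_2/b+\beta_3}<0$. Continuity of $h$ on $[0,1/b]$ and the intermediate value theorem then deliver a zero $T^*\in(0,1/b)$, hence a positive steady state $S_2^*=(T^*,E^*,N^*)$.

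For uniqueness I would show that \eqref{eq:psd} makes $h$ strictly decreasing. With $\phi(T)=\frac{T+\beta_1}{T^2+\beta_2T+\beta_3}$, a quotient-rule computation gives
\[
-\phi'(T)=\frac{(T+\beta_1)^2-(\beta_1^2+\beta_3-\beta_1\beta_2)}{(T^2+\beta_2T+\beta_3)^2},
\]
so that $\Delta^*=\max_{0\le T\le 1/b}\bigl(-\phi'(T)\bigr)$ by definition. Hence $h'(T)=-\tfrac{a\beta_4 b}{\gamma_c}-\phi'(T)\le-\tfrac{a\beta_4 b}{\gamma_c}+\Delta^*\le 0$ on $[0,1/b]$ whenever $\Delta^*\le\beta_4\frac{ab}{\gamma_c}$; and since $-\phi'$ is a non-constant rational function, $h'$ vanishes at only finitely many points, so $h$ is strictly decreasing on $[0,1/b]$. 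Combined with $h(0)>0>h(1/b)$, this forces $h$ to have exactly one zero in $(0,1/b)$, i.e.\ \eqref{eq:model_1} has a unique positive equilibrium.

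The only real work — and the step most prone to error — is the algebraic reduction in the first paragraph: carrying the parameter combinations through the substitutions until they reassemble cleanly into $\beta_1,\dots,\beta_4$, and recognizing that $\Delta^*$ is nothing but $\max_{[0,1/b]}(-\phi')$, which is exactly what turns the monotonicity requirement $h'\le 0$ into the stated condition \eqref{eq:psd}. Once the scalar equation $h(T)=0$ is in hand, the rest is elementary calculus.
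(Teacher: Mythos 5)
Your proposal is correct and follows essentially the same route as the paper: reduce the equilibrium system to a scalar equation in $T^*$ (your $h$ is, up to the positive factor $-\beta_4$, the paper's $f=f_1-f_2$), apply the intermediate value theorem using the endpoint signs at $0$ and $1/b$ for existence, and use the derivative bound implied by \eqref{eq:psd} to get monotonicity and hence uniqueness. Your added remark that $h'$ can vanish only at finitely many points is a slight tightening of the paper's step, which passes from $f'\geq 0$ directly to strict monotonicity.
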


\begin{proof}
The positive steady states of equation \eqref{eq:model_1} are given by solutions of the following algebraic equation:
\begin{equation}
\label{eq:ps_1}
\left\{
\begin{array}{rcl}
a ( 1 - b T^*) - \gamma_c E^* & = & 0,\\
s_e - d_e E^* + \alpha_n k_n N^* \Big(  \dfrac{T^*}{ T^* + g } \Big)  - \gamma_e E^* T^* & = & 0,   \\
s_n - d_n N^* - k_n N^* \Big(  \dfrac{T^*}{ T^* + g } \Big) & = & 0.
\end{array}
\right.
\end{equation} 
From the third equation of \eqref{eq:ps_1}, we have 
\begin{equation}
\label{eq:ps_2}
N^* = \dfrac{s_n}{ d_n + k_n T^*/(T^*+g) }  .
\end{equation} 
Substituting \eqref{eq:ps_2} into the second equation of \eqref{eq:ps_1}, we obtain
\begin{equation}
\label{eq:ps_3}
E^* = \dfrac{1}{d_e + \gamma_e T^*} \left( s_e + \dfrac{ \alpha_n k_n s_n T^*/(T^*+g)}{ d_n + k_n T^* /(T^*+g) } \right)  \triangleq f_1(T^*).
\end{equation} 
Moreover, the first equation of \eqref{eq:ps_1} gives
\begin{equation}
\label{eq:ps_4}
E^* = a (1-bT^*)/\gamma_c \triangleq f_2(T^*).
\end{equation} 
Set 
\begin{equation}
\label{eq:ps_4aa}
f(T^*)= f_1(T^*) - f_2(T^*). 
\end{equation} 

Next, we need to show that the equation $f(T^*)=0$ has at least one root in the interval $(0, 1/b)$, which implies that equation \eqref{eq:model_1} has at least one positive steady state. 
	
It is easy to verify that when $a > \dfrac{\gamma_c s_e}{d_e}$, $f(0) = \dfrac{s_e}{d_e} - \dfrac{a}{\gamma_c} < 0$ and $f( \dfrac{1}{b}) >0$. Moreover, the function $f$ is continuous in the interval $[0, 1/b]$. Hence, there exists at least one $T^* \in (0, 1/b)$ that satisfies $f(T^*) =0$.
	
We have proven the existence of positive equilibria. Next, to establish the uniqueness of steady states, i.e., the uniqueness of the roots of $f$ on $[0, 1/b]$, we only need to show that $f'(T^*) \geq 0$ for and $T^*\in (0, 1/b)$.
				
From \eqref{eq:ps_3}, we have 
 \begin{equation}
\label{eq:re_ap3}
\begin{aligned}
f_1(T^*) &= \dfrac{1}{d_e + \gamma_e T^*} \left( s_e + \dfrac{ \alpha_n k_n s_n T^*/(T^*+g)}{ d_n + k_n T^* /(T^*+g) } \right)  \\
&= \dfrac{1}{d_e + \gamma_e T^*} \left( s_e + \dfrac{ \alpha_n k_n s_n T^*}{ d_n (T^*+g) + k_n T^* } \right) \\
&= \dfrac{\big( s_e (d_n +k_n)+ \alpha_n k_n s_n \big) T^* + d_n g s_e}{ \gamma_e (d_n +k_n) {T^*}^2 + \big( d_e(d_n +k_n) + \gamma_e d_n g\big)T^* + d_e d_n g} \\
&= \dfrac{1}{\beta_4} \dfrac{T^*+ \beta_1}{ {T^*}^2 + \beta_2 T^* + \beta_3}. 
\end{aligned}
\end{equation}
Thus, from the condition \eqref{eq:psd}, we have
\begin{equation}
\label{eq:re_ap4}
f_1'(T^*) = - \frac{1}{\beta_4}\dfrac{ (T^*+ \beta_1)^2 - ( \beta_1^2 + \beta_3 - \beta_1 \beta_2)}{ ({T^*}^2 + \beta_2 T^* + \beta_3)^2}\geq -\frac{ab}{\gamma_c} = f_2'(T^*) 
\end{equation}
for any $T^* \in (0, 1/b)$. Hence, we always have $f'(T^*) > 0$ for $T^*\in (0, 1/b)$, the equation $f(T^*) =0$ has a unique solution in the interval $(0, 1/b)$.
\end{proof}

Now, we explore the asymptotic stability of positive steady states.
\begin{theorem}
\label{thm2.2_p}
Consider the system \eqref{eq:model_1}, assuming that $a > \dfrac{ \gamma_c s_e}{d_e}$, and
\begin{equation}
\label{eq:re_ap11}
\begin{aligned}
 \gamma_e < \min \left\{ b d_e, \dfrac{d_e g \alpha_n k_n d_n s_n b^2}{(s_e k_n + \alpha_n k_n s_n) \big( d_n (1+ b g)+ k_n  \big)} \right\},
\end{aligned}
\end{equation}
the positive steady state $S_1^*$ is locally asymptotically stable.
\end{theorem}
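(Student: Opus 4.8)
The plan is to linearize \eqref{eq:model_1} at the positive equilibrium $S_2^*=(T^*,E^*,N^*)$ (with $0<T^*<1/b$ as located in Theorem \ref{thm2.b}), form the characteristic polynomial of the Jacobian \eqref{eq:j1}, and apply the Routh--Hurwitz criterion for cubic polynomials. A convenient preliminary simplification is that at any positive equilibrium the first equation of \eqref{eq:ps_1} gives $\gamma_c E^*=a(1-bT^*)$, so the $(1,1)$ entry of $J$ collapses from $a-2abT^*-\gamma_c E^*$ to $-abT^*<0$. Since $J$ has zero entries in positions $(1,3)$ and $(3,2)$, expanding $\det(\lambda I - J)$ is short and yields $\lambda^3+a_1\lambda^2+a_2\lambda+a_3$ with
\[
a_1=abT^*+(d_e+\gamma_e T^*)+\Big(d_n+\tfrac{k_nT^*}{T^*+g}\Big),
\]
and $a_2,a_3$ explicit in the entries of $J$; the only entry whose sign is not immediate is $J_{21}=\frac{g\alpha_n k_n N^*}{(T^*+g)^2}-\gamma_e E^*$.

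Writing $J$ schematically as $\left(\begin{smallmatrix}-p&-q&0\\ u&-v&w\\ -m&0&-n\end{smallmatrix}\right)$ with $p,q,v,w,m,n>0$ and $u=J_{21}$ of a priori unknown sign, a direct expansion gives $a_3=pvn+q(nu-wm)$, and after the cross terms cancel, $a_1a_2-a_3$ equals a sum of manifestly positive monomials in $p,v,n$ plus $(p+v)\,qu+qwm$. Hence the whole argument reduces to the single scalar inequality $nu\ge wm$: if $nu-wm\ge 0$ then $a_3\ge pvn>0$ directly, and since $wm>0$ this forces $u>0$, which in turn makes $a_1a_2-a_3>0$; together with $a_1>0$ (trivial) and the implied $a_2>0$, all Routh--Hurwitz conditions hold and every eigenvalue of $J$ has negative real part, so $S_2^*$ is locally asymptotically stable.

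It remains to establish $nu\ge wm$. A short computation reduces this to $\dfrac{d_n g\alpha_n k_n N^*}{(T^*+g)^2}\ge\Big(d_n+\dfrac{k_nT^*}{T^*+g}\Big)\gamma_e E^*$ (the terms carrying $k_n^2$ cancel). I would then substitute $N^*$ from \eqref{eq:ps_2} and $E^*$ from \eqref{eq:ps_3}/\eqref{eq:re_ap3}, clear denominators to obtain a polynomial inequality in $T^*$, and bound the $T^*$-dependent factors by their extreme values over $[0,1/b]$ (for instance $T^*\le 1/b$, $N^*\ge s_n/(d_n+k_n)$, and an upper estimate of $E^*$ of the form $\frac{s_e(d_n+k_n)+\alpha_n k_n s_n}{d_e(d_n+k_n)}$ or a comparable one). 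The two bounds on $\gamma_e$ in hypothesis \eqref{eq:re_ap11} are precisely what is needed to absorb these factors and make the inequality hold uniformly for all admissible $T^*$.

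The main obstacle is this last uniform estimate: the quantities $N^*(T^*)$ and $E^*(T^*)$ are genuine rational functions of $T^*$, and one must choose the monotone bounds carefully enough that the stated (somewhat intricate) form of \eqref{eq:re_ap11} actually suffices, rather than a stronger or messier condition. Everything upstream — the simplification of $J_{11}$, the factored characteristic polynomial, and the reduction of both $a_3>0$ and $a_1a_2-a_3>0$ to the single sign condition $nu\ge wm$ — is routine once one exploits the two structural zeros in $J$.
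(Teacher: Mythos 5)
Your setup coincides with the paper's: linearize at the positive equilibrium, use $\gamma_c E^*=a(1-bT^*)$ to collapse $J_{11}$ to $-abT^*$, expand the characteristic polynomial, and apply Routh--Hurwitz. Your identities $a_3=pvn+q(nu-wm)$ and $a_1a_2-a_3=(\text{positive monomials in }p,v,n)+(p+v)qu+qwm$, including the cancellation of the $k_n^2$ terms, are all correct. The gap is the final reduction: replacing both Routh--Hurwitz inequalities by the single condition $nu\ge wm$ is strictly stronger than $a_3>0$ because it discards the term $pvn$, and it is \emph{not} a consequence of the stated hypotheses. A telltale sign is that your target inequality $\tfrac{d_n g\alpha_n k_n N^*}{(T^*+g)^2}\ge n\,\gamma_e E^*$ involves neither $a$ nor the first branch $\gamma_e<bd_e$ of \eqref{eq:re_ap11}, whereas the paper's proof uses both. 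Concretely, $nu\ge wm$ forces $u=J_{21}=\tfrac{g\alpha_n k_n N^*}{(T^*+g)^2}-\gamma_e E^*>0$, but the sign of $J_{21}$ is not controlled by \eqref{eq:re_ap11}: take $b=g=d_e=d_n=\alpha_n=s_n=1$, $k_n=0.01$, $s_e=100$, $\gamma_e=10^{-3}$, and $a/\gamma_c$ slightly above $100$. Then \eqref{eq:re_ap11} holds ($10^{-3}<1$ and $10^{-3}<0.01/2.0301\approx 4.9\times 10^{-3}$), the positive equilibrium sits at $T^*\approx 0$, and there $u\approx \tfrac{\alpha_n k_n s_n}{g d_n}-\tfrac{\gamma_e s_e}{d_e}=0.01-0.1<0$, so $nu<0\le wm$. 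The equilibrium is nonetheless stable --- but only because $pvn\ge \gamma_c s_e b\,d_e d_n T^*/d_e$ (from $a>\gamma_c s_e/d_e$) absorbs the negative contribution $-q\gamma_e E^* d_n$ precisely when $\gamma_e<bd_e$.

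The paper therefore never isolates $nu-wm$: it estimates $c_3=pvn+q(nu-wm)$ as a whole, lower-bounding $ab$ by $\gamma_c s_e b/d_e$ and splitting $-\gamma_e E^* n$ into a $d_n$-part (absorbed by $pvn$ via $\gamma_e<bd_e$) and a $k_nT^*/(T^*+g)$-part (absorbed by the $N^*$-term via the second branch of \eqref{eq:re_ap11}). Likewise, when $u<0$ the inequality $a_1a_2-a_3>0$ is no longer automatic, and the paper handles it by retaining the $ab$-terms and reducing to the positivity of $A=ab\,v+ab\,n+\gamma_c\tfrac{g\alpha_n k_n N^*}{(T^*+g)^2}-\gamma_c\gamma_e E^*$, estimated by the same two-way splitting. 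To repair your argument you must carry the $pvn$ and $(p+v)$-terms along rather than reduce to the sign of $u$; everything upstream of that reduction is sound.
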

\begin{proof} According to Theorem \ref{thm2.b}, there exists at least one positive steady state of the system \eqref{eq:model_1}. Let $(T^*, E^*, N^*)$ be the positive steady state of \eqref{eq:model_1}. We linearize equation \eqref{eq:model_1} at $(T^*, E^*, N^*)$ and obtain the following characteristic equation
\begin{equation}
\label{eq:re_ap12}
\lambda^3 + c_1 \lambda^2 + c_2 \lambda +c_3=0,
\end{equation}
where 
\begin{equation}
\label{eq:re_ap13}
\begin{aligned}
c_1 &= ab T^* + d_e + \gamma_e T^* + d_n + \dfrac{k_n T^*}{ T^* +g}, \\
c_2 &= ab T^*(d_e + \gamma_e T^*) + \Big(d_n + \dfrac{k_n T^*}{ T^* +g} \Big) (d_e + \gamma_e T^*)  + ab T^* \Big(d_n + \dfrac{k_n T^*}{ T^* +g} \Big) + \gamma_c T^* \left( \dfrac{g \alpha_n k_n N^* }{(T^*+g)^2} - \gamma_e E^* \right), \\
c_3 & = T^* \left( ab(d_e + \gamma_e T^*) \Big(d_n + \dfrac{k_n T^*}{ T^* +g} \Big) + \gamma_c \dfrac{g \alpha_n k_n N^* }{(T^*+g)^2} d_n  - \gamma_c \gamma_e E^* \Big(d_n + \dfrac{k_n T^*}{ T^* +g} \Big)  \right).
\end{aligned}
\end{equation}

Based on the Routh-Hurwitz stability criterion, all roots of \eqref{eq:re_ap12} have negative real parts if and only if
$c_1 > 0$, $c_1 c_2 - c_3 > 0$ and $c_3 > 0$. It is easy to have $c_1 > 0$. Hence, we only need to find the conditions so that $c_1 c_2 - c_3 > 0$ and $c_3 > 0$. 

From the conditions \eqref{eq:ps_2}, \eqref{eq:ps_3}, and $a > \dfrac{ \gamma_c s_e}{d_e}$, we have 
 \begin{eqnarray*}
c_3 &= & T^* \left( ab(d_e + \gamma_e T^*) (d_n + \dfrac{k_n T^*}{ T^* +g}) + \gamma_c \dfrac{g \alpha_n k_n N^* }{(T^*+g)^2} d_n  - \gamma_c \gamma_e E^* (d_n + \dfrac{k_n T^*}{ T^* +g})  \right) \\
&> & T^* \left( \dfrac{\gamma_c s_e}{d_e} b(d_e + \gamma_e T^*) (d_n + \dfrac{k_n T^*}{ T^* +g}) + \gamma_c \dfrac{g \alpha_n k_n N^* }{(T^*+g)^2} d_n  - \gamma_c \gamma_e E^* (d_n + \dfrac{k_n T^*}{ T^* +g}) \right)\\
&> & T^* \left( \dfrac{\gamma_c s_e}{d_e} b d_e d_n + \gamma_c \dfrac{g \alpha_n k_n N^* }{(T^*+g)^2} d_n  - \gamma_c \gamma_e E^* (d_n + \dfrac{k_n T^*}{ T^* +g}) \right)\\
&= & T^* \left( \dfrac{\gamma_c s_e}{d_e} b d_e d_n + \dfrac{g \alpha_n k_n d_n}{(T^*+g)^2} \dfrac{ \gamma_c s_n}{ d_n + k_n T^*/(T^*+g) }   -  (d_n + \dfrac{k_n T^*}{ T^* +g}) \dfrac{\gamma_c \gamma_e}{d_e + \gamma_e T^*} \left( s_e + \dfrac{ \alpha_n k_n s_n T^*/(T^*+g)}{ d_n + k_n T^* /(T^*+g) } \right) \right)  \\
&= &T^* \left( \dfrac{\gamma_c s_e}{d_e} b d_e d_n + \dfrac{ g \alpha_n k_n d_n\gamma_c s_n}{ d_n (T^*+g)^2+ k_n T^* (T^*+g) } -  \dfrac{\gamma_c \gamma_e}{d_e + \gamma_e T^*} \big( s_e d_n + (s_e k_n + \alpha_n k_n s_n) T^*/(T^*+g) \big) \right)  \\
&\geq & T^* \left( \gamma_c s_e b d_n + \dfrac{ g \alpha_n k_n d_n\gamma_c s_n}{ d_n (1/b+g)^2+ k_n/b (1/b+g) } -  \dfrac{\gamma_c \gamma_e}{d_e } \big( s_e d_n + (s_e k_n + \alpha_n k_n s_n) T^*/(T^*+g) \big) \right) \\
&= & T^* \left( \gamma_c s_e b d_n + \dfrac{ g \alpha_n k_n d_n\gamma_c s_n b^2}{ d_n (1+ b g)^2+ k_n (1+bg) } -  \dfrac{\gamma_c \gamma_e}{d_e } s_e d_n -\dfrac{\gamma_c \gamma_e}{d_e } (s_e k_n + \alpha_n k_n s_n) T^*/(T^*+g) \right)  \\
&\geq &T^* \left( \gamma_c s_e b d_n + \dfrac{ g \alpha_n k_n d_n\gamma_c s_n b^2}{ d_n (1+ b g)^2+ k_n (1+bg) } -  \dfrac{\gamma_c \gamma_e}{d_e } s_e d_n -\dfrac{\gamma_c \gamma_e}{d_e } (s_e k_n + \alpha_n k_n s_n) /(1+ b g)  \right) \\
&=& T^* \left( \gamma_c s_e d_n \dfrac{ b d_e - \gamma_e}{d_e} + \gamma_c \dfrac{ d_e g \alpha_n k_n d_n s_n b^2 - \gamma_e (s_e k_n + \alpha_n k_n s_n) \big( d_n (1+ b g)+ k_n  \big)}{ d_e d_n (1+ b g)^2+d_e k_n (1+bg) } \right) \\
&>&0.
\end{eqnarray*}

We have 
 \begin{eqnarray*}
 c_1 c_2 - c_3  
&= & ab T^* \left( ab T^*(d_e + \gamma_e T^*) + ab T^* \Big(d_n + \dfrac{k_n T^*}{ T^* +g} \Big) + \gamma_c T^* \dfrac{g \alpha_n k_n N^* }{(T^*+g)^2}  \right) - ab T^* \gamma_c T^* \gamma_e E^* \\
&&{}+ (d_e + \gamma_e T^*) \left( ab T^*(d_e + \gamma_e T^*) + \Big(d_n + \dfrac{k_n T^*}{ T^* +g} \Big) (d_e + \gamma_e T^*)  + ab T^* \Big(d_n + \dfrac{k_n T^*}{ T^* +g} \Big) + \gamma_c T^*  \dfrac{g \alpha_n k_n N^* }{(T^*+g)^2}  \right) \\
&&{} - (d_e + \gamma_e T^*) \gamma_c T^* \gamma_e E^*  \\
&&{}+ \Big(d_n + \dfrac{k_n T^*}{ T^* +g} \Big) \left( ab T^*(d_e + \gamma_e T^*) +  \Big(d_n + \dfrac{k_n T^*}{ T^* +g} \Big) (d_e + \gamma_e T^*)  + ab T^* \Big(d_n + \dfrac{k_n T^*}{ T^* +g} \Big)  \right)  \\
&&{} + \gamma_c T^* \dfrac{g \alpha_n k_n N^* }{(T^*+g)^2} \dfrac{k_n T^*}{ T^* +g} 
\end{eqnarray*}
 \begin{eqnarray*}
~~~~~~~~~~~&> & ab T^* \left( ab T^*(d_e + \gamma_e T^*) + ab T^* \Big(d_n + \dfrac{k_n T^*}{ T^* +g} \Big) + \gamma_c T^* \dfrac{g \alpha_n k_n N^* }{(T^*+g)^2}  \right) - ab T^* \gamma_c T^* \gamma_e E^* \\
&&{}+ (d_e + \gamma_e T^*) \left( ab T^*(d_e + \gamma_e T^*)  + ab T^* \Big(d_n + \dfrac{k_n T^*}{ T^* +g} \Big) + \gamma_c T^*  \dfrac{g \alpha_n k_n N^* }{(T^*+g)^2}  \right) - (d_e + \gamma_e T^*) \gamma_c T^* \gamma_e E^* \\
&= & (ab T^* +d_e + \gamma_e T^* ) T^* \left( ab (d_e + \gamma_e T^*) + ab  \Big(d_n + \dfrac{k_n T^*}{ T^* +g} \Big) + \gamma_c  \dfrac{g \alpha_n k_n N^* }{(T^*+g)^2} - \gamma_c  \gamma_e E^* \right)
\end{eqnarray*}

Thus, let
$$A =  ab (d_e + \gamma_e T^*) + ab  \Big(d_n + \dfrac{k_n T^*}{ T^* +g} \Big) + \gamma_c  \dfrac{g \alpha_n k_n N^* }{(T^*+g)^2} - \gamma_c  \gamma_e E^*,
$$
from \eqref{eq:ps_2}, \eqref{eq:ps_3}, $a > \dfrac{ \gamma_c s_e}{d_e}$, and \eqref{eq:re_ap11}, we obtain 
 \begin{eqnarray*}
A &=& ab(d_e + \gamma_e T^*) + ab \Big(d_n + \dfrac{k_n T^*}{ T^* +g} \Big) + \gamma_c \dfrac{g \alpha_n k_n N^* }{(T^*+g)^2}   - \gamma_c \gamma_e E^* \\
&> & \dfrac{\gamma_c s_e}{d_e} b(d_e + \gamma_e T^*) + \dfrac{\gamma_c s_e}{d_e} b \Big(d_n + \dfrac{k_n T^*}{ T^* +g} \Big) + \gamma_c \dfrac{g \alpha_n k_n N^* }{(T^*+g)^2}   - \gamma_c \gamma_e E^*  \\
&> & \dfrac{\gamma_c s_e}{d_e} b d_e + \dfrac{\gamma_c s_e}{d_e} b d_n + \gamma_c \dfrac{g \alpha_n k_n N^* }{(T^*+g)^2}   - \gamma_c \gamma_e E^*  \\
&=&   \dfrac{\gamma_c s_e}{d_e} b ( d_e + d_n) + \dfrac{g \alpha_n k_n }{(T^*+g)^2} \dfrac{ \gamma_c s_n}{ d_n + k_n T^*/(T^*+g) }   - \dfrac{\gamma_c \gamma_e}{d_e + \gamma_e T^*} \left( s_e + \dfrac{ \alpha_n k_n s_n T^*/(T^*+g)}{ d_n + k_n T^* /(T^*+g) } \right)   \\
&= & \dfrac{\gamma_c s_e}{d_e} b (d_e+ d_n) + \dfrac{ g \alpha_n k_n \gamma_c s_n}{ d_n (T^*+g)^2+ k_n T^* (T^*+g) } -  \dfrac{\gamma_c \gamma_e}{d_e + \gamma_e T^*} \left( s_e + \dfrac{ \alpha_n k_n s_n T^*/(T^*+g)}{ d_n + k_n T^* /(T^*+g) } \right)   \\
&\geq &  \dfrac{\gamma_c s_e}{d_e} b (d_e+ d_n) + \dfrac{ g \alpha_n k_n \gamma_c s_n}{ d_n (T^*+g)^2+ k_n T^* (T^*+g) } -  \dfrac{\gamma_c \gamma_e}{d_e } \left( s_e + \dfrac{ \alpha_n k_n s_n T^*}{ (d_n + k_n) T^* +d_n g } \right)    \\
&\geq &  \dfrac{\gamma_c s_e}{d_e} b (d_e+ d_n) + \dfrac{ g \alpha_n k_n \gamma_c s_n}{ d_n (1/b+g)^2+ k_n  (1/b+g) /b } -  \dfrac{\gamma_c \gamma_e}{d_e } \left( s_e + \dfrac{ \alpha_n k_n s_n /b}{ (d_n + k_n) /b +d_n g } \right)    \\
&= &  \dfrac{\gamma_c s_e}{d_e} b (d_e+ d_n) + \dfrac{ g \alpha_n k_n \gamma_c s_n b^2}{ d_n (1+bg)^2+ k_n  (1+ bg)  } -  \dfrac{\gamma_c \gamma_e}{d_e } \left( s_e + \dfrac{ \alpha_n k_n s_n }{ (d_n + k_n)  +b d_n g } \right)    \\
&= &  \dfrac{\gamma_c s_e}{d_e} b (d_e+ d_n) + \dfrac{ g \alpha_n k_n \gamma_c s_n b^2}{ d_n (1+bg)^2+ k_n  (1+ bg)  } -  \dfrac{\gamma_c \gamma_e}{d_e } s_e  - \dfrac{\gamma_c \gamma_e}{d_e }  \dfrac{ \alpha_n k_n s_n }{ (d_n + k_n)  +b d_n g }    \\
&=&  \dfrac{\gamma_c s_e}{d_e} \big( b (d_e+ d_n) -\gamma_e  \big)   +    \dfrac{ g \alpha_n k_n \gamma_c s_n b^2d_e - \gamma_c \gamma_e \alpha_n k_n s_n (1+bg)}{ d_e d_n (1+bg)^2+ d_e k_n  (1+ bg)  }     \\
&=&  \dfrac{\gamma_c s_e}{d_e} \big( b (d_e+ d_n) -\gamma_e  \big)   + \alpha_n k_n \gamma_c s_n  \dfrac{ g  b^2d_e - \gamma_e  (1+bg)}{ d_e d_n (1+bg)^2+ d_e k_n  (1+ bg)  } 
\end{eqnarray*}

Moreover, we have 
 \begin{eqnarray*}
g b^2 d_e - \gamma_e (1+bg) &>&g b^2 d_e - 
\dfrac{d_e g \alpha_n k_n d_n s_n b^2}{(s_e k_n + \alpha_n k_n s_n) \big( d_n (1+ b g)+ k_n  \big)}(1+bg) \\
&>& g b^2 d_e - 
\dfrac{d_e g \alpha_n k_n d_n s_n b^2}{ \alpha_n k_n s_n  d_n (1+ b g)}(1+bg) \\
&=&0.
\end{eqnarray*}
Thus, $A > 0$, and hence $c_1 c_2 - c_3 > 0$.

Thus, from the Routh-Hurwitz stability criterion, the positive steady state of \eqref{eq:model_1} is asymptotically stable when the condition \eqref{eq:re_ap11} is satisfied.
\end{proof}

Next, we investigate the global stability of the tumor-free equilibrium $S_1^*$ in $\Omega$, where $\Omega$ is given as
\begin{equation}
\label{eq:bd_region}
\Omega = \{ (T, E, N) \in \mathbb{R} ^3 \mid T >0,   E \geq 0,  N \geq 0 \}. 
\end{equation}
From Theorem \ref{thm2.1} and Theorem \ref{thm2.1_b}, $\Omega$ is an invariant set for equation \eqref{eq:model_1}.

\begin{theorem}
\label{thm2.2}
Consider the model equation \eqref{eq:model_1}, if $a < \dfrac{\gamma_c s_e} {d_e + \gamma_e/b}$, the tumor-free equilibrium $S_1^*$ is globally asymptotically stable in $\Omega$.
\end{theorem}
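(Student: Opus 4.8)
The plan is to prove global attractivity of $S_1^*$ in $\Omega$ by a bootstrapping argument built on successive differential-inequality (comparison) estimates, and then to upgrade attractivity to global asymptotic stability using the local stability already recorded in Theorem \ref{thm2.a}. The starting point is that any solution issuing from $\Omega$ is positive (Theorem \ref{thm2.1}) and bounded (Theorem \ref{thm2.1_b}); in particular $\limsup_{t\to\infty}T(t)\le 1/b$, so for every $\epsilon>0$ there is $t_1$ with $T(t)\le 1/b+\epsilon$ for $t\ge t_1$. Discarding the nonnegative term $\alpha_n k_n N\,T/(T+g)$ in the second equation of \eqref{eq:model_1} gives $\dot E \ge s_e - d_e E - \gamma_e T E \ge s_e - \bigl(d_e + \gamma_e(1/b+\epsilon)\bigr)E$ for $t\ge t_1$. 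Comparing with the scalar linear equation $\dot u = s_e - (d_e+\gamma_e(1/b+\epsilon))u$, whose unique equilibrium is globally attracting, yields $\liminf_{t\to\infty} E(t)\ge s_e/(d_e+\gamma_e(1/b+\epsilon))$, and letting $\epsilon\downarrow 0$ gives $\liminf_{t\to\infty}E(t)\ge s_e/(d_e+\gamma_e/b)$.

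Next, since $1-bT\le 1$ for $T\ge 0$, the first equation of \eqref{eq:model_1} gives $\frac{\mathrm d}{\mathrm d t}\ln T(t) = a(1-bT) - \gamma_c E \le a - \gamma_c E$. By the previous step together with the hypothesis $a < \gamma_c s_e/(d_e+\gamma_e/b)$, there exist $\delta>0$ and $t_2\ge t_1$ such that $a - \gamma_c E(t) \le -\delta$ for all $t\ge t_2$; integrating gives $T(t)\le T(t_2)\,\mathrm e^{-\delta(t-t_2)}\to 0$. Once $T(t)\to 0$, and using that $E$ and $N$ are bounded, the terms $k_n N\,T/(T+g)$, $\alpha_n k_n N\,T/(T+g)$ and $\gamma_e E T$ all tend to $0$, so the second and third equations take the form $\dot E = s_e - d_e E + \phi_1(t)$ and $\dot N = s_n - d_n N + \phi_2(t)$ with $\phi_i(t)\to 0$; since the homogeneous parts $\dot y=-d_e y$ and $\dot y=-d_n y$ are exponentially stable, the variation-of-constants formula forces $E(t)\to s_e/d_e$ and $N(t)\to s_n/d_n$. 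Hence every solution starting in $\Omega$ converges to $S_1^*$.

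To conclude, observe that $a < \gamma_c s_e/(d_e+\gamma_e/b) < \gamma_c s_e/d_e$, so Theorem \ref{thm2.a} applies and $S_1^*$ is locally asymptotically stable, in particular Lyapunov stable. Combining Lyapunov stability with the global attractivity established above gives global asymptotic stability of $S_1^*$ in $\Omega$.

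The main technical obstacle is the rigorous chaining of the first two steps: one must carry the parameter $\epsilon$ through the comparison argument for $E$, verify that the resulting lower bound $s_e/(d_e+\gamma_e/b)$ strictly exceeds $a/\gamma_c$ — this is precisely where the hypothesis is used, the extra $\gamma_e/b$ in the denominator accounting for the worst-case tumour load $T\le 1/b$ in the $-\gamma_e ET$ term — and only then deduce the exponential decay of $T$. Everything after that reduces to the standard fact that a linear equation with an exponentially stable limiting vector field and a decaying forcing term converges to the limiting equilibrium, so no further delicate estimates are needed.
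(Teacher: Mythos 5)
Your proposal is correct and follows essentially the same route as the paper: a comparison estimate giving $\liminf_{t\to\infty}E(t)\ge s_e/(d_e+\gamma_e/b)$, from which $a-\gamma_c E$ becomes eventually negative and $T$ decays exponentially, then convergence of $E$ and $N$ to their tumor-free values, and finally local stability from Theorem \ref{thm2.a} to upgrade attractivity to asymptotic stability. Your handling of the bound $T\le 1/b+\epsilon$ via $\limsup_{t\to\infty}T(t)\le 1/b$ is in fact slightly more careful than the paper's, which applies $\gamma_e T E\le (\gamma_e/b)E$ directly; otherwise the two arguments coincide.
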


\begin{proof} 
From Theorem \ref{thm2.1} and equation \eqref{eq:model_1}, we have
\begin{equation*}
\dfrac{ {\rm d} E}{ {\rm d} t} \geq s_e - d_e E - \dfrac{\gamma_e}{b} E .
\end{equation*}
Applying the comparison principle,
\begin{equation}
\label{eq:E0_low}
E(t) \geq \dfrac{s_e}{d_e + \gamma_e/b} + \Big( E(0) - \dfrac{s_e}{d_e + \gamma_e/b} \Big) {\mathrm e}^{ - (d_e + \gamma_e/b) t }.
\end{equation}
Thus, for any $\epsilon > 0$, there exists a time $t_2$ such that
\begin{equation*}
E(t) \geq \dfrac{s_e}{d_e + \gamma_e/b} - \epsilon
\end{equation*}
for $t \geq t_2$. 
	
From the first equation of \eqref{eq:model_1}, we have
\begin{equation}
\label{eq:thm_1}
\dfrac{ {\rm d} T}{ {\rm d} t} \leq ( a- \gamma_c E) T \leq ( a- \dfrac{ \gamma_c s_e}{d_e + \gamma_e/b} + \epsilon) T.
\end{equation}
Thus, we choose $\epsilon = \dfrac{1}{2} \Big( \dfrac{ \gamma_c s_e}{d_e + \gamma_e/b} - a \Big) > 0$, the equation \eqref{eq:thm_1} yields
\begin{equation}
\label{eq:thm_2}
\dfrac{ {\rm d} T}{ {\rm d} t}  \leq \dfrac{1}{2} ( a- \dfrac{ \gamma_c s_e}{d_e + \gamma_e/b} ) T.
\end{equation}
Since $a < \dfrac{ \gamma_c s_e}{d_e + \gamma_e/b}$, it follows from \eqref{eq:thm_2} that $ T(t) \rightarrow 0$ when $t \rightarrow \infty$. Therefore, the second and the third equations together imply $ E \rightarrow \dfrac{s_e}{d_e}$ and $ N \rightarrow \dfrac{s_n}{d_n}$ when $t \rightarrow \infty$. The above arguments imply that the tumor-free equilibrium $S_1^*$ is globally attractive in $\Omega$ when $a < \dfrac{ \gamma_c s_e}{d_e + \gamma_e/b}$.
	
Finally, from Theorem \ref{thm2.a}, $S_1^*$ is locally stable when $a < \dfrac{ \gamma_c s_e}{d_e + \gamma_e/b}$. Thus, $S_1^*$ is globally asymptotically stable. The proof is complete.
\end{proof}

Next, we examined the long-term behavior of tumor cells with varying tumor growth rates. According to Theorem \ref{thm2.a} and Theorem \ref{thm2.2}, the tumor-free equilibrium is locally asymptotically stable when $a < \dfrac{ \gamma_c s_e}{d_e}$ and globally asymptotically stable in the region $\Omega$ when $a < \dfrac{\gamma_c s_e} {d_e + \gamma_e/b}$. Furthermore, Theorem \ref{thm2.b} indicates that a positive steady state exists when $a > \dfrac{ \gamma_c s_e}{d_e}$. We set parameters $(s_e, s_n) = (1.5, 4)$. Based on the parameter values presented in Table \ref{tab:table1}, we have $\dfrac{ \gamma_c s_e}{d_e} = 0.125$ and $\dfrac{\gamma_c s_e} {d_e + \gamma_e/b} \approx 0.00245$. We selected six different values of $a$ and initialized the conditions with $(T(0), E(0), N(0) =(10000, 20, 100)$. The time courses of the cancer cell concentration are plotted in Fig. \ref{fig:fig0}. The cancer cells are eliminated when $a < \dfrac{\gamma_c s_e} {d_e + \gamma_e/b}$, while a coexistence state of immune cells and cancer cells can occur if $a > \dfrac{ \gamma_c s_e}{d_e}$. As illustrate in Fig. \ref{fig:fig0}, when $\dfrac{\gamma_c s_e} {d_e + \gamma_e/b} <a < \dfrac{ \gamma_c s_e}{d_e}$, the cancer cells may either be eliminated (for $a = 0.03$ and $0.06$) or coexist with immune cells (for $a = 0.09$) depending on the selected parameter values.

\begin{figure}[htbp]
	\centering
	\includegraphics[width=9 cm]{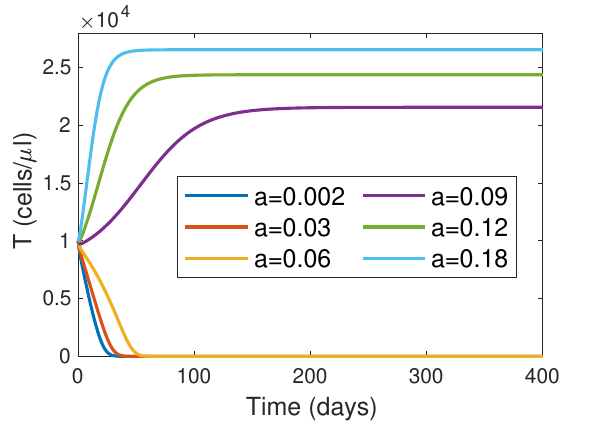}
	\caption{ \textbf{Time courses of the cancer cell concentration $T(t)$ with different values of $a$.}  Here, we set $(s_e, s_n)=(1.5, 4)$, and the other parameter values are the same as those in Table \ref{tab:table1}. }
	\label{fig:fig0}
\end{figure}

Similar to Theorems \ref{thm2.a}, \ref{thm2.b}, and \ref{thm2.2}, we explore the dynamics of the model equation \eqref{eq:model_1} when there are no effector T cells $(s_e = 0)$ or no naive T cells $(s_n = 0)$. 

\begin{theorem}
\label{thm2.aaa}
Considering the equation \eqref{eq:model_1}, we have the following conclusions:
\begin{enumerate}
\item[\rm{(i)}] If $s_e = s_n = 0$, the equation \eqref{eq:model_1} has two equilibria $(0, 0, 0)$ and $(\dfrac{1}{b}, 0, 0)$. Furthermore, the steady state $(0, 0, 0)$ is a locally unstable saddle point, while the state $(\dfrac{1}{b}, 0, 0)$ is globally asymptotically stable within the domain $\Omega$.
\item[\rm{(ii)}] If $s_e =0$ and $s_n > 0$, the equation \eqref{eq:model_1} has a locally unstable tumor-free equilibrium at $(0, 0, \dfrac{s_n}{d_n})$, along with at least one positive equilibrium.
\item[\rm{(iii)}] If $s_e >0$ and $s_n = 0$, the system \eqref{eq:model_1} has a tumor-free equilibrium at $(0, \dfrac{s_e}{d_e}, 0)$. If $a>\dfrac{\gamma_c s_e}{d_e}$, there exists at least one equilibrium of type $(T^*, E^*, 0)$ where $T^*, E^* > 0$. Conversely, if $a < \dfrac{\gamma_c s_e}{d_e}$, the tumor-free equilibrium is locally asymptotically stable, whereas if $a > \dfrac{\gamma_c s_e}{d_e}$, the tumor-free equilibrium becmes unstable.
\end{enumerate}
\end{theorem}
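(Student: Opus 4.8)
The plan is to dispatch all three cases with the same two-step procedure used for Theorems~\ref{thm2.a} and~\ref{thm2.b}: first solve the stationary system to list the equilibria, then evaluate the Jacobian~\eqref{eq:j1}, which collapses to a triangular matrix at each of these equilibria so that its spectrum can be read off the diagonal.

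\textbf{Equilibria.} When $s_n=0$ (cases (i) and (iii)) the stationary third equation is $N^*\bigl(d_n+k_nT^*/(T^*+g)\bigr)=0$, forcing $N^*=0$; the stationary second and first equations then give, in case (i), $E^*(d_e+\gamma_eT^*)=0$ hence $E^*=0$, and $aT^*(1-bT^*)=0$, hence exactly the two points $(0,0,0)$ and $(1/b,0,0)$ and no others, while in case (iii) they give $E^*=s_e/(d_e+\gamma_eT^*)$ together with either $T^*=0$ (giving $(0,s_e/d_e,0)$) or $E^*=a(1-bT^*)/\gamma_c$. When $s_n>0$ (case (ii)) the third equation gives $N^*$ as in~\eqref{eq:ps_2}, and substituting into the second yields $E^*=f_1(T^*)$ with $s_e$ replaced by $0$; taking $T^*=0$ forces $E^*=0$, $N^*=s_n/d_n$, which is the tumor-free point $(0,0,s_n/d_n)$. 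To get the remaining (non-tumor-free) equilibria in (ii) and (iii) I would reduce to a single scalar equation on $[0,1/b]$ and invoke the intermediate value theorem exactly as in the proof of Theorem~\ref{thm2.b}: in (ii), $f(T^*):=f_1(T^*)-a(1-bT^*)/\gamma_c$ satisfies $f(0)=-a/\gamma_c<0$ and $f(1/b)>0$ unconditionally, so a root in $(0,1/b)$ always exists; in (iii), $\tilde f(T^*):=s_e/(d_e+\gamma_eT^*)-a(1-bT^*)/\gamma_c$ satisfies $\tilde f(0)=s_e/d_e-a/\gamma_c$ and $\tilde f(1/b)=s_e/(d_e+\gamma_e/b)>0$, so a root exists precisely when $a>\gamma_cs_e/d_e$, and that root yields $E^*=a(1-bT^*)/\gamma_c>0$.

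\textbf{Local stability.} Evaluating~\eqref{eq:j1} at each point, observe that $T^*=0$ annihilates the $(1,2)$-entry and $N^*=0$ annihilates the $(2,1)$- and $(3,1)$-entries, so the Jacobian is triangular there. This gives spectrum $\{a,\,-d_e,\,-d_n\}$ at $(0,0,0)$ and at $(0,0,s_n/d_n)$ (both unstable, with $(0,0,0)$ a saddle), spectrum $\{-a,\,-d_e-\gamma_e/b,\,-d_n-k_n/(1+bg)\}$ at $(1/b,0,0)$ (all negative, hence locally asymptotically stable), and spectrum $\{a-\gamma_cs_e/d_e,\,-d_e,\,-d_n\}$ at $(0,s_e/d_e,0)$, which yields the asserted switch of stability across $a=\gamma_cs_e/d_e$.

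\textbf{Global stability in case (i).} This is the only part needing more than linear algebra, and I would argue it as a cascade. Since $s_n=0$, the third equation gives $\dot N\le-d_nN$, so $N(t)\to0$; feeding this together with the ultimate bound $T(t)\le1/b+\epsilon$ from Theorem~\ref{thm2.1_b} into the second equation gives $\limsup_{t\to\infty}E(t)\le\alpha_nk_n\epsilon/d_e$ for every $\epsilon>0$, hence $E(t)\to0$; finally, once $E(t)<\epsilon$ one has $aT(1-bT)-\gamma_c\epsilon T\le\dot T\le aT(1-bT)$, and comparison with the two logistic equations, whose carrying capacities are $(a-\gamma_c\epsilon)/(ab)$ and $1/b$, forces $T(t)\to1/b$ (using $T(0)>0$). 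Combined with the local asymptotic stability of $(1/b,0,0)$ already established, this gives global asymptotic stability in $\Omega$. I expect this cascade — in particular, making the asymptotically-autonomous reductions rigorous via comparison principles rather than by hand — to be the main technical point; everything else is a direct transcription of the arguments used for Theorems~\ref{thm2.a}, \ref{thm2.b}, and~\ref{thm2.2}.
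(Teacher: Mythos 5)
Your proposal is correct and follows essentially the same route as the paper: the paper likewise reads the equilibria off the stationary system, obtains local stability from the triangular Jacobians, and proves global attractiveness of $(\tfrac{1}{b},0,0)$ by the same cascade $N\to 0 \Rightarrow E\to 0 \Rightarrow T\to \tfrac1b$ via comparison with logistic bounds (the paper only writes out this last part explicitly, deferring the rest to the analogues of Theorems \ref{thm2.a}--\ref{thm2.b}). One cosmetic slip: at $(0,0,\tfrac{s_n}{d_n})$ the $(2,1)$- and $(3,1)$-entries are \emph{not} annihilated (since $N^*\neq 0$); the Jacobian is lower triangular there because $T^*=0$ kills the $(1,2)$-, $(1,3)$- and $(2,3)$-entries, and your stated spectrum $\{a,-d_e,-d_n\}$ is still correct.
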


\begin{proof} We only give the proof for the globally asymptotically stable equilibrium $(\dfrac{1}{b}, 0, 0)$ in part \rm{(i)}. The other parts can be proved analogously to Theorems \ref{thm2.a}, \ref{thm2.b}, and \ref{thm2.2}. 
	
To prove (i), it is sufficient to verify the global attractiveness and the local stability of the equilibrium $(\dfrac{1}{b}, 0, 0)$.
	
When $s_e = s_n = 0$, equation \eqref{eq:model_1} can be written as
\begin{equation}
\label{eq:model_1_n}
\left\{
\begin{aligned}
\dfrac{{\rm d} T}{{\rm d} t}& =  a T ( 1 - b T) - \gamma_c E T,\\
\dfrac{{\rm d} E}{{\rm d} t}& =  - d_e E + \alpha_n k_n N \Big(  \dfrac{T}{ T + g } \Big)  - \gamma_e E T,   \\
\dfrac{{\rm d} N}{{\rm d} t}& =  - d_n N - k_n N \Big(  \dfrac{T}{ T + g } \Big).
\end{aligned}
\right.
\end{equation} 
The Jacobian matrix around the equilibrium $(\dfrac{1}{b}, 0, 0)$ is given by
\begin{equation}
\label{eq:j2}
J_1 = \left( 
\begin{array}{ccc}
-a    & - \dfrac{\gamma_c}{b}   & 0 \\
0  &  -d_e - \dfrac{\gamma_e}{b} ~~ & \dfrac{ \alpha_n k_n }{ 1 + b g } \\
0 & 0  &  -d_n - \dfrac{ k_n  }{ 1 + b g } 
\end{array}
\right).
\end{equation}
The three eigenvalues of the matrix $J_1$ are $-a$, $-d_e - \dfrac{\gamma_e}{b}$, and $-d_n - \dfrac{ k_n  }{ 1 + b g }$. Hence, the equilibrium $(\dfrac{1}{b}, 0, 0)$ is asymptotically stable.
	
Next, we verify the global attractiveness. From the third equation of \eqref{eq:model_1_n} and $d_n > 0$, there exists a finite time $t_3 >0$ such that
\begin{equation}
\label{eq:model_1_n_1}
N(t) < \dfrac{\epsilon}{ \alpha_n k_n }
\end{equation} 
for any $\epsilon >0$ and $t \geq t_3$. Thus, from the second equation of \eqref{eq:model_1_n}, 
\begin{equation}
\label{eq:model_1_n_2}
	\dfrac{{\rm d} E}{{\rm d} t} < - d_e E + \epsilon \Big(  \dfrac{T}{ T + g } \Big) - \gamma_e E T < - d_e E + \epsilon,
	\end{equation} 
	we have
	\begin{equation}
	\limsup_{\substack t \rightarrow \infty} E(t) \leq \dfrac{\epsilon}{d_e}.
	\end{equation}
	Hence, for any $\epsilon_1 > 0$, there exists $t_4 \geq t_3$ such that
	\begin{equation}
	\label{eq:model_1_n_3}
	E(t) < \dfrac{\epsilon + \epsilon_1}{ d_e }
	\end{equation} 
	for any $t \geq t_4$. Since $\epsilon >0$ and $\epsilon_1 >0$ are sufficiently small, we can choose $\epsilon$ and $\epsilon_1 $ such that $\gamma_c (\epsilon + \epsilon_1  ) < d_e$, namely, $\dfrac{\gamma_c (\epsilon + \epsilon_1  )  }{ d_e}< 1$. Hence, 
	\begin{equation}
	\label{eq:model_1_n_4}
	\dfrac{{\rm d} T}{{\rm d} t} \geq a T( 1 -bT) - \dfrac{\gamma_c (\epsilon + \epsilon_1  )  }{ d_e} T = aT \Big(  1- \dfrac{\gamma_c (\epsilon + \epsilon_1  )  }{ d_e} - bT \Big).
	\end{equation} 
	Applying the comparison principle, we have
	\begin{equation}
	\liminf_{\substack t \rightarrow \infty} T(t) \geq \dfrac{1}{b} \Big(  1- \dfrac{\gamma_c (\epsilon + \epsilon_1  )  }{ d_e} \Big).
	\end{equation}
	Since both $\epsilon >0$ and $\epsilon_1 >0$ are sufficiently small, we have
	\begin{equation}
	\liminf_{\substack t \rightarrow \infty} T(t) \geq \dfrac{1}{b}.
	\end{equation}
	From \eqref{eq:T_ub}, we have $T(t) \rightarrow \dfrac{1}{b}$ as $t \rightarrow \infty$. Hence, $(\dfrac{1}{b}, 0, 0)$ is globally attractive in $\Omega$.
\end{proof}

\begin{remark}
\label{rem:2}
From Theorem \ref{thm2.aaa} (i), any trajectory with initial conditions $T(0) > 0$, $E(0) >0$, and $N(0) > 0$ will move towards a state with cancer cells when there is no immune cell injection, i.e., $s_e = s_n = 0$. 
\end{remark}

From Theorem \ref{thm2.2}, if $a < \dfrac{ \gamma_c s_e}{d_e + \gamma_e/b}$, the cancer cell population will approach zero as time progresses. However, the time to achieve a given positive threshold is interesting in estimating the process of cancer evolution. This issue is discussed below.

\subsection{The first time the cancer cell population reaches a given level}
Here, we investigate the time at which the concentration of cancer cells reaches a specified level, given the initial conditions $T(0)=x$, $E(0)=y$, and $N(0)=z$. We define $\tau$ as the first time $T(t)$ equals a given positive value $\epsilon$. Namely, 
\begin{equation}
\label{eq:ft_1}
\tau( x, y, z) := \inf \{ t>0 \mid T(t)=\epsilon \}
\end{equation}
for $x, y, z \geq 0$. Here, $T(t)$, $E(t)$, and $N(t)$ represent the solutions of equation \eqref{eq:model_1} with the initial condition \\ $(T(0), E(0), N(0)) = (x, y, z)$. We may also write the solutions as $T(t, x, y, z)$, $E(t, x, y, z)$, and $N(t, x, y, z)$ in order to specify the initial conditions.

\begin{theorem}
\label{thm2.first}
Consider the equation \eqref{eq:model_1}, let $T(t)$, $E(t)$, and $N(t)$ be the solution of \eqref{eq:model_1} with initial condition $(T(0), E(0), N(0))=(x, y, z)$. If $a < \dfrac{\gamma_c s_e}{d_e + \gamma_e/b}$, for any $\epsilon \in (0, x)$, the function $\tau$ defined in \eqref{eq:ft_1} satisfies the following properties:
\begin{enumerate}
\item[\rm{(i)}] $\tau$ is continuous on $[\epsilon, +\infty) \times [\dfrac{s_e}{d_e + \gamma_e/b}, +\infty) \times [0, +\infty)$ and is smooth in $\Omega_1 = (\epsilon, +\infty) \times (\dfrac{s_e}{d_e + \gamma_e/b}, +\infty) \times (0, +\infty)$.
\item[\rm{(ii)}] $\tau$ satisfies the following partial differential equation:
\begin{equation}
\label{eq:ft_2}
\left\{
\begin{aligned}
&f_1(x, y, z ) \partial_{x} \tau + f_2(x, y, z ) \partial_{y} \tau + f_3(x, y, z )\partial_{z} \tau =  1, \quad (x, y, z ) \in \Omega_1, \\
&\tau(\epsilon, y, z ) =  0,   \quad ( y, z ) \in  [ \dfrac{s_e}{d_e + \gamma_e/b}, +\infty) \times [0, +\infty),
\end{aligned}
\right.
\end{equation}
where 
\begin{eqnarray*}
f_1(x, y, z ) &=& a x ( 1 - b x) - \gamma_c x y,\\
f_2(x, y, z ) &=& s_e - d_e y +
\alpha_n k_n z \Big(  \dfrac{x}{ x + g } \Big)  - \gamma_e x y,\\
f_3(x, y, z ) &=& s_n - d_n z - k_n z \Big(  \dfrac{x}{ x + g } \Big).
\end{eqnarray*}
\end{enumerate}
\end{theorem}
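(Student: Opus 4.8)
The plan is to treat $\tau$ as the first‑passage time of the flow of \eqref{eq:model_1} through the surface $\{T=\epsilon\}$: the regularity in (i) will follow from smooth dependence of ODE solutions on their initial data together with the implicit function theorem, and the first‑order equation in (ii) from the flow's semigroup property. The only genuinely analytic input is an a priori estimate guaranteeing that, under $a<\gamma_c s_e/(d_e+\gamma_e/b)$ and for $E(0)=y\ge s_e/(d_e+\gamma_e/b)$, the trajectory reaches $T=\epsilon$ in finite time and crosses the surface transversally, i.e.\ with $\mathrm{d}T/\mathrm{d}t<0$ at $t=\tau$.

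First I would prove that estimate. From $\mathrm{d}T/\mathrm{d}t\le aT(1-bT)$ the logistic comparison gives $T(t)\le\max\{x,1/b\}$, and once $T\le 1/b$ it stays there; on that part of the trajectory the second equation of \eqref{eq:model_1} yields $\mathrm{d}E/\mathrm{d}t\ge s_e-(d_e+\gamma_e/b)E$, so by the comparison principle (cf.\ \eqref{eq:E0_low}) and $y\ge s_e/(d_e+\gamma_e/b)$ one gets $E(t)\ge s_e/(d_e+\gamma_e/b)$. Substituting back into the first equation and using $a<\gamma_c s_e/(d_e+\gamma_e/b)$ gives $\mathrm{d}T/\mathrm{d}t\le\bigl(a-\gamma_c s_e/(d_e+\gamma_e/b)\bigr)T<0$ whenever $T>0$ (and while $T>1/b$ one already has $\mathrm{d}T/\mathrm{d}t=aT(1-bT)-\gamma_c ET<0$ directly). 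Hence $T$ decreases strictly to $0$, consistently with Theorem \ref{thm2.2}; so $\tau(x,y,z)$ is finite, $T(t)>\epsilon$ on $[0,\tau)$, and $f_1(\epsilon,E(\tau),N(\tau))=\mathrm{d}T/\mathrm{d}t|_{t=\tau}<0$. The same differential inequality makes $\tau(x,y,z)\to 0$ as $x\downarrow\epsilon$, which gives the boundary value $\tau(\epsilon,y,z)=0$ and continuity along that face.

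For part (i), set $G(t,x,y,z)=T(t,x,y,z)-\epsilon$. Since the vector field of \eqref{eq:model_1} is $C^\infty$ on $\Omega$, solutions depend $C^\infty$ on $(t,x,y,z)$, so $G$ is jointly smooth, and $\partial_tG|_{t=\tau}=f_1(\epsilon,E(\tau),N(\tau))\ne 0$ by the previous paragraph. The implicit function theorem applied at each point then represents $\tau$ locally as a smooth function of $(x,y,z)$ (it coincides with the first‑passage time by the strict monotonicity of $T$), which gives $\tau\in C^\infty(\Omega_1)$; continuity on the whole closed domain follows because the transversality estimate still holds on the faces $y=s_e/(d_e+\gamma_e/b)$ and $z=0$ — for the latter using $N(t)>0$ for $t>0$ from Theorem \ref{thm2.1} — and along $x=\epsilon$ by the previous paragraph.

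For part (ii), write $\Phi_t$ for the flow of \eqref{eq:model_1}. For $(x,y,z)\in\Omega_1$ the estimates above keep $\Phi_t(x,y,z)\in\Omega_1$ for $t\in[0,\tau(x,y,z))$, and because $\tau$ records the \emph{first} hitting time of $\{T=\epsilon\}$ one has $\tau(\Phi_t(x,y,z))=\tau(x,y,z)-t$ there. Differentiating this identity at $t=0$, with $\tfrac{\mathrm{d}}{\mathrm{d}t}\Phi_t|_{t=0}=(f_1,f_2,f_3)$, yields the transport equation in \eqref{eq:ft_2}, while $\tau(\epsilon,y,z)=0$ is immediate. The main obstacle is the a priori estimate of the second paragraph: showing, uniformly over the prescribed domain, that $T=\epsilon$ is actually reached and crossed transversally, since that transversality is exactly what powers the implicit function theorem in (i) and the differentiation of the semigroup identity in (ii); the one delicate point is the bookkeeping needed to combine the comparison estimates in the regimes $x\le 1/b$ and $x>1/b$. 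The remaining ingredients — smooth dependence on initial data, the implicit function theorem, and the semigroup identity — are routine.
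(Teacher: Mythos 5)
Your proposal is correct and follows essentially the same route as the paper: both hinge on the lower bound $E(t)\ge s_e/(d_e+\gamma_e/b)$ forcing $\mathrm{d}T/\mathrm{d}t<0$ at the crossing (transversality), the implicit function theorem for smoothness of $\tau$, and differentiation of the semigroup identity $\tau(\Phi_t(x,y,z))=\tau(x,y,z)-t$ at $t=0$ for the transport equation. The only cosmetic difference is that the paper establishes continuity on the closed domain via an explicit $\liminf$/$\limsup$ subsequence argument rather than by invoking persistence of transversality on the boundary faces, and the ``bookkeeping'' issue you flag for initial data with $x>1/b$ is present in the paper's argument as well (it applies \eqref{eq:E0_low} without separating that regime).
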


\begin{proof} The proof below follows the idea of Theorem 1.1 in Hynd et al. \citep{Hynd2022}.
	
(i). For a given $\epsilon > 0$, Theorem \ref{thm2.2} implies that there exists a finite time $t_5$ such that cancer cell concentration reaches $\epsilon$ at $t = t_5$, i.e., 
\begin{equation}
\label{eq:ft_3}
	T(t_5)=\epsilon.
\end{equation}

Let $(x^k, y^k, z^k)$ be a sequence so that $(x^k, y^k, z^k ) \rightarrow (x, y, z)$ as $k \rightarrow \infty$ and $x ^k \geq \epsilon$, $y ^k \geq \dfrac{s_e}{ d_e + \gamma_e/b}$, $z ^k \geq 0$. There exists a subsequence $\tau(x^{k_j}, y^{k_j}, z^{k_j})$ such that 
\begin{equation}
\label{eq:ft_4}
\tau_1= \liminf_{\substack k \rightarrow \infty} \tau(x^k, y^k, z^k ) =\lim_{\substack j \rightarrow \infty} \tau(x^{k_j}, y^{k_j}, z^{k_j}).
\end{equation}
From the definition of $\epsilon$,
\begin{equation}
\label{eq:ft_5}
T( \tau(x^k, y^k, z^k ),x^k, y^k, z^k )  =\epsilon 
\end{equation}
for any $k \in \mathbb{N}$. Because of the continuity of $T$, we have
\begin{equation}
\label{eq:ft_6}
T( \tau_1, x, y, z ) =\epsilon.
\end{equation}
For any $(x, y, z) \in [\epsilon, +\infty) \times [\dfrac{s_e}{d_e + \gamma_e/b}, +\infty) \times [0, +\infty)$, we have $x \geq \epsilon$. Hence,
\begin{equation}
\label{eq:ft_7}
\tau(x, y, z) \leq \tau_1 =\liminf_{\substack k \rightarrow \infty} \tau(x^k, y^k, z^k ) .
\end{equation}
There also exists another subsequence $\tau(x^{k_l}, y^{k_l}, z^{k_l} )$ such that 
\begin{equation}
\label{eq:ft_8}
\tau_2= \limsup_{\substack k \rightarrow \infty} \tau(x^k, y^k, z^k ) =\lim_{\substack l \rightarrow \infty} \tau(x^{k_l}, y^{k_l}, z^{k_l}).
\end{equation}
Since $y \geq \dfrac{s_e}{ d_e + \gamma_e/b}$, it follows from \eqref{eq:E0_low} that 
\begin{equation}
\label{eq:ft_9}
E(t) \geq \dfrac{s_e}{ d_e + \gamma_e/b}
\end{equation}
for all $t \geq 0$, which yields 
\begin{equation}
\label{eq:ft_10}
\dfrac{{\rm d} T }{{\rm d} t} <0
\end{equation}
for $0 \leq t \leq \tau_2$ and $x \geq \epsilon$. Thus 
\begin{equation}
\label{eq:ft_120}
\tau(x, y, z) = \tau_2 = \limsup_{\substack k \rightarrow \infty} \tau(x^k, y^k, z^k ).
\end{equation}
Therefore, \eqref{eq:ft_7} and \eqref{eq:ft_120} together give
\begin{equation}
\label{eq:ft_11}
\tau = \lim_{\substack k \rightarrow \infty} \tau(x^k, y^k, z^k ).
\end{equation}
Thus, $\tau$ is continuous on $[\epsilon, +\infty) \times [\dfrac{s_e}{d_e + \gamma_e/b}, +\infty) \times [0, +\infty)$. 
	
Let $x>\epsilon$, $y>\dfrac{s_e}{ d_e + \gamma_e/b}$, and $z>0$. From \eqref{eq:E0_low}, we have $E(t)>\dfrac{s_e}{ d_e + \gamma_e/b}$ for $\forall t \geq 0$. In addition, it follows from \eqref{eq:cancer_1} that $E(t)>0$ for $\forall t \geq 0$. Thus, noticing that $T(\tau, x, y, z) = \epsilon$, from the first equation of \eqref{eq:model_1}, and $a < \dfrac{\gamma_c s_e}{ d_e + \gamma_e/b} $, we have
\begin{equation}
\label{eq:thm_fta}
\dfrac{ {\rm d} T(\tau, x, y, z) }{ {\rm d} t} < ( a- \gamma_c E) T < \Big(  \dfrac{\gamma_c s_e}{ d_e + \gamma_e/b} - \dfrac{ \gamma_c s_e}{d_e + \gamma_e/b}  \Big) T =0.
\end{equation}
Since $T$ is smooth in a neighborhood of $( \tau, x, y, z)$, the implicit function theorem implies that $\tau$ is smooth in a neighborhood of $( x, y, z)$. Thus, $\tau$ is smooth in $\Omega_1$.
	
(ii).  We denote $T(t), E(t), N(t)$ as the solutions of system \eqref{eq:model_1} with initial conditions $T(0)=x> \epsilon$, $E(0)=y > \dfrac{s_e}{d_e + \gamma_e/b}$, and $N(0)=z>0$. For any $t \in ( 0, \tau(x, y, z ) )$, since the equation \eqref{eq:model_1} is an autonomous system, it is easy to obtain
\begin{equation}
\label{eq:ft_12}
\begin{aligned}
\tau(T(t), E(t), N(t) ) & = \inf\{ \tau_1 >0 \mid T(t+\tau_1)= \epsilon\} \\
& = \inf\{ \tau_1 >0 \mid T(t+\tau_1)= \epsilon ~  \mathrm{and} ~ \tau_1 > 0 \} \\
& = \inf\{ s-t \mid T(s)= \epsilon ~ \mathrm{and} ~ s > t \} \\
& = \inf\{ s \mid T(s)= \epsilon ~ \mathrm{and} ~ s > t \} - t \\
& = \inf\{ s  > t\mid T(s)= \epsilon \} - t \\
& = \tau(x, y, z) - t .
\end{aligned}
\end{equation}
Thus, we have
\begin{equation}
\label{eq:ft_13}
\begin{aligned}
\dfrac{ \tau(T(t), E(t), N(t) ) - \tau(x, y, z )}{t} = - 1 .
\end{aligned}
\end{equation}
Letting $t \rightarrow 0$, we have
\begin{equation}
\label{eq:ft_14}
\begin{aligned}
-1 &= \dfrac{{\rm d} }{{\rm d} t}  \tau(T(t), E(t), N(t) ) \Big\vert_{t=0}  \\
& = [ f_1(T, E, N ) \partial_{T} \tau + f_2(T, E, N) \partial_{E} \tau + f_3(T, E, N )\partial_{N} \tau ]\Big\vert_{t=0}  \\
& = f_1(x, y, z) \partial_{x} \tau + f_2(x, y, z) \partial_{y} \tau + f_3(x, y, z)\partial_{z} \tau.
\end{aligned}
\end{equation}
	
We show that $\tau(x, y, z)$ satisfies the boundary condition of equation \eqref{eq:ft_2}. Suppose that a function $\tau_*(x, y, z)$ satisfies \eqref{eq:ft_2}. Then, for $ 0< t< \tau(x, y, z ) $, we have
\begin{equation}
\label{eq:ft_15}
\begin{aligned}
 \dfrac{{\rm d} }{{\rm d} t}  \tau_*(T(t), E(t), N(t) ) =  f_1(T, E, N ) \partial_{T} \tau_* + f_2(T, E, N) \partial_{E} \tau_* + f_3(T, E, N )\partial_{N} \tau_*  =-1.
\end{aligned}
\end{equation}
Integrating both sides of equation \eqref{eq:ft_15} from $ t =0$ to $t = \tau(x, y, z ) $, we obtain
\begin{equation}
\label{eq:ft_16}
\tau_*( T(\tau(x, y, z ) ), E(\tau(x, y, z ) ) , N(\tau(x, y, z ) )  ) - \tau_*(x, y, z )  =  -\tau(x, y, z ) .
\end{equation}
Since  $T(\tau(x, y, z ) ) = \epsilon$, $E(\tau(x, y, z ) ) \geq \dfrac{s_e}{d_e + \gamma_e/b}$, and $\tau_*( \epsilon, E(\tau(x, y, z) ) , N(\tau(x, y, z ) )  ) =0$, $\tau_*(x, y, z ) = \tau(x, y, z ) $. This proves the theorem.
\end{proof}

From the proof, the equation \eqref{eq:ft_2} has a unique solution, which gives the function $\tau(x, y, z)$. According to the theory of the first-order partial differential equation (PDE), if we can find three independent prime integrals of the characteristic equation of the PDE \eqref{eq:ft_2}, i.e., the following equation:
\begin{equation}
\label{eq:ft_aa}
\begin{aligned}
\dfrac{ {\rm d} x }{  f_1(x, y, z) } = \dfrac{ {\rm d} y }{ f_2(x, y, z) } = \dfrac{ {\rm d} z }{ f_3(x, y, z) } = \dfrac{ {\rm d} \tau }{  1 },
\end{aligned}
\end{equation}
we can then obtain the solution of \eqref{eq:ft_2}. Unfortunately, since the coefficients of the PDE \eqref{eq:ft_2} are coupled and complex, it is not easy to obtain the expression of the solution for \eqref{eq:ft_2}. This study investigates the numerical approximation of $\tau$ in the section \ref{sec:sec1}.

\begin{remark}
\label{rem:3}
Theorem \ref{thm2.first} shows the first time at which the concentration of cancer cells $T$ reaches a given threshold. This result is significant for clinical applications for predicting the critical time point $\tau$, and the cancer cell concentration will remain at a low level when $t > \tau$.
\end{remark}

Furthermore, the following theorem gives an upper bound on the time point $\tau$.
\begin{theorem}
\label{thm2c}
If $a < \dfrac{\gamma_c s_e}{d_e + \gamma_e/b}$, for any $x \geq \epsilon > 0$, $y \geq \dfrac{s_e}{d_e + \gamma_e/b}$ and $z \geq 0$, we have
\begin{equation}
\label{eq:ft_17}
\begin{aligned}
\tau(x, y, z ) \leq \dfrac{ \ln \dfrac{x}{\epsilon} } {\dfrac{\gamma_c s_e}{d_e + \gamma_e/b} -a + a b \epsilon }.
\end{aligned}
\end{equation}
\end{theorem}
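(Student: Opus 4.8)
The plan is to collapse the three-dimensional system to a single scalar differential inequality for $T(t)$, using the lower bound on the effector population that is already available. First I would note that the hypothesis $a < \dfrac{\gamma_c s_e}{d_e + \gamma_e/b}$ together with Theorem~\ref{thm2.2} guarantees $T(t)\to 0$, so that $\tau(x,y,z)$ is finite and well defined for every $\epsilon\in(0,x)$. Next, since $y\ge \dfrac{s_e}{d_e+\gamma_e/b}$, the comparison estimate \eqref{eq:E0_low} from the proof of Theorem~\ref{thm2.2} gives $E(t)\ge \dfrac{s_e}{d_e+\gamma_e/b}$ for all $t\ge 0$.

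The second step is to run the argument on the interval $[0,\tau]$, where $T(t)\ge\epsilon$. This monotonicity follows because, under the above bound on $E$, the right-hand side of the $T$-equation in \eqref{eq:model_1} is strictly negative whenever $T>0$, so $T$ decreases monotonically from $x$ down to $\epsilon$. Hence, for $0\le t\le\tau$, the inequality $-abT^2\le -ab\epsilon\,T$ is available, and the first equation of \eqref{eq:model_1} yields
\[
\frac{{\rm d}T}{{\rm d}t}=T\big(a-abT-\gamma_c E\big)\le T\Big(a-ab\epsilon-\frac{\gamma_c s_e}{d_e+\gamma_e/b}\Big)=-c\,T,\qquad c:=\frac{\gamma_c s_e}{d_e+\gamma_e/b}-a+ab\epsilon,
\]
where $c>0$ thanks to the assumption on $a$. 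The comparison principle (equivalently, differentiating $T(t)\,{\rm e}^{ct}$) then gives $T(t)\le x\,{\rm e}^{-ct}$ on $[0,\tau]$.

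Finally I would evaluate at $t=\tau$: since $T(\tau)=\epsilon$, we get $\epsilon\le x\,{\rm e}^{-c\tau}$, hence $c\,\tau\le\ln(x/\epsilon)$, which rearranges to the claimed bound \eqref{eq:ft_17}. I do not expect any genuine obstacle; the only point deserving explicit justification is the claim $T(t)\ge\epsilon$ on $[0,\tau]$, which the monotonicity remark settles, and which is precisely what lets the quadratic term $-abT^2$ be bounded below by the linear term $-ab\epsilon\,T$ that produces the $ab\epsilon$ in the denominator of the estimate.
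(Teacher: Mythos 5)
Your proposal is correct and is essentially the paper's own argument in differential rather than integral form: the paper writes $T(t)=x\,\mathrm{e}^{\int_0^t(a(1-bT)-\gamma_c E)\,{\rm d}s}$, evaluates at $t=\tau$, takes logarithms, and bounds the integrand below using exactly the same two ingredients you use, namely $E(t)\ge s_e/(d_e+\gamma_e/b)$ from \eqref{eq:E0_low} and $T(t)\ge\epsilon$ on $[0,\tau]$. Your extra monotonicity remark justifying $T\ge\epsilon$ matches the observation \eqref{eq:ft_10} already made in the proof of Theorem \ref{thm2.first}, so there is no gap.
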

\begin{proof} Let $(T(t), E(t), N(t))$ be the solution of equation \eqref{eq:model_1} with initial value $(x, y, z)$. From the first equation of \eqref{eq:model_1}, we have
$$
T(t) = x {\mathrm e}^{ \int_{ 0}^{t} \big( a ( 1- bT(s)) - \gamma_c E(s)  \big) {\rm d} s},
$$
which gives
\begin{equation*}
\epsilon = x {\mathrm e}^{ \int_{ 0}^{\tau(x,y,z)} \big( a ( 1- bT(s)) - \gamma_c E(s)  \big) {\rm d} s}.
\end{equation*}
We then have
\begin{equation}
\label{eq:ft_18}
\int_{ 0}^{\tau(x,y,z)} \big( \gamma_c E(s) -a + a bT(s)  \big) {\rm d} s = \ln \dfrac{x}{\epsilon}.
\end{equation}
From \eqref{eq:ft_9} and \eqref{eq:ft_18}, 
\begin{equation}
\label{eq:ft_19}
\int_{ 0}^{\tau(x,y,z)} \left( \dfrac{\gamma_c s_e}{d_e + \gamma_e/b} -a + a bT(s) \right) {\rm d} s \leq \ln \dfrac{x}{\epsilon}.
\end{equation}
Noting that $T(s) \geq \epsilon$ for $0 \leq s \leq \tau(x,y,z)$, we obtain
\begin{equation}
\label{eq:ft_20}
\int_{ 0}^{\tau(x,y,z)} \left( \dfrac{\gamma_c s_e}{d_e + \gamma_e/b} -a + a b \epsilon  \right) {\rm d} s \leq \int_{ 0}^{\tau(x,y,z)} \left( \dfrac{\gamma_c s_e}{d_e + \gamma_e/b} -a + a bT(s)  \right) {\rm d} s \leq \ln \dfrac{x}{\epsilon}.
\end{equation}
Since $a < \dfrac{\gamma_c s_e}{d_e + \gamma_e/b}$ and $\dfrac{\gamma_c s_e}{d_e + \gamma_e/b} -a + a b \epsilon > 0$, it follows from \eqref{eq:ft_20} that 
\begin{equation*}
\tau(x,y,z) \Big( \dfrac{\gamma_c s_e}{d_e + \gamma_e/b} -a + a b \epsilon  \Big)  \leq  \ln \dfrac{x}{\epsilon},
\end{equation*}
which implies \eqref{eq:ft_17}. The proof is complete.
\end{proof} 

\subsection{Numetical analysis}
\subsubsection{Method}
In this section, we perform numerical simulations to further explore the stability of the steady states. Unless otherwise specified, the default parameter values used in the simulations are listed in Table \ref{tab:table1}. Additionally, based on equations \eqref{eq:ps_1}, \eqref{eq:ps_2}, and \eqref{eq:ps_4}, the values of $E^*$ and $N^*$ for each equilibrium $(T^*, E^*, N^*)$ of equation \eqref{eq:model_1} can be uniquely determined by the value of $T^*$. Therefore, in the discussions that follow, we present the two-dimensional phase portraits in the $(T, E)$-plane rather than the three-dimensional $(T, E, N)$-space.

\subsubsection{Bifurcation with immune parameter $s_e$ or $s_n$}
We begin by examining the existence and stability of the steady states of system \eqref{eq:model_1} through numerical simulations. Following this, we provide a numerical approximation of the function $\tau$, which describes the first time the concentration of cancer cells reaches a specified level.

\begin{table}[htbp]
	\centering
	\caption{Default parameter values. }
	\begin{tabular}{lllll}
		\toprule
		Parameter                        & Value   & Unit    & Reference                   \\
		\midrule
		$a$                         & 0.18         & $ \rm {day}^{-1} $& \citep{Kuznetsov1994,Talkington2018}                                         \\
		$ b$               & 1/30000           & $ \rm  ( cells/ \mu l )^{-1}$           & \citep{Talkington2018}              \\
		$ \gamma_c$             & 0.005           & $ \rm ( cells/ \mu l )^{-1} \rm day^{-1}$           & \citep{Moore2004,Talkington2018}                 \\
		$ s_e$             &     5      & $ \rm ( cells / \mu l ) \rm day^{-1}$          &       $  \rm -$       \\
		$ d_e$            & 0.06          & $  \rm day^{-1}$           & \citep{Moore2004,Talkington2018}               \\
		$ \alpha_n$              & 100         & $  \rm -$           & \citep{Talkington2018}                 \\
		$ k_n$              & 0.001        & $   \rm day^{-1}$     & \citep{Moore2004}             \\
		$ g$             & 100        &  $ \rm cells / \mu l $    & \citep{Talkington2018}    \\
		$ \gamma_e$             & 0.0001          & $ \rm ( cells / \mu l )^{-1} \rm day^{-1}$   & \citep{Talkington2018}  \\
		$ s_n$            &     7      & $ \rm ( cells / \mu l ) \rm day^{-1}$  &      $  \rm -$    \\
		$ d_n$             & 0.04          & $  \rm day^{-1}$           & \citep{Moore2004,Talkington2018}                \\
		\bottomrule
	\end{tabular}
	\label{tab:table1}
\end{table}

Fig. \ref{fig:fig01}(a) shows the number and stability of positive steady states as parameter $s_n$ varies, while $s_e$ is set to $0$. From Fig. \ref{fig:fig01}(a), we observe that when $s_n$ is an appropriate value, there are bistable steady states characterized by high and low concentrations of cancer cells, corresponding to either disease or healthy states, respectively. 

Fig. \ref{fig:fig01}(b) displays the number and stability of nonzero steady states as the parameter $s_e$ varies, with $s_n$ fixed at $0$. In contrast to Fig. \ref{fig:fig01}(a), this scenario has at most two positive steady states. Notably, only the positive steady state with a high concentration of cancer cells is stable, suggesting that when $s_n=0$, a patient is likely to progress to the disease state. 

\begin{figure}[htbp]
	\centering
	\includegraphics[width=15 cm]{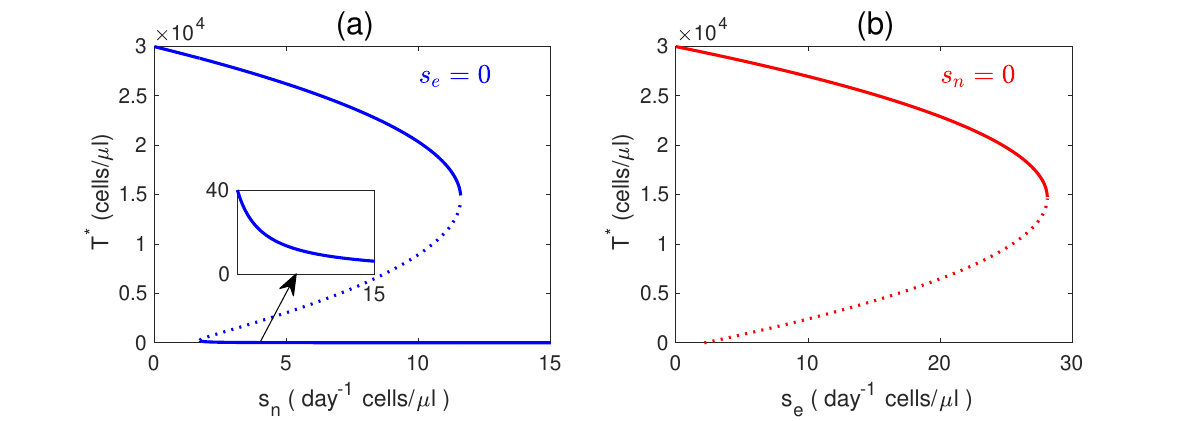}
	\caption{ \textbf{Stability of steady state with positive cancer cell concentration concerning changes in the parameter $s_n$ (or $s_e$) when $s_e =0$ (or $s_n =0$).}  The solid lines represent stable steady states, while the dotted lines indicate unstable steady states. All other parameter values are taken from Table \ref{tab:table1}. }
	\label{fig:fig01}
\end{figure}

Based on the default parameter values in Table \ref{tab:table1}, we investigate how varying the immune parameters $s_e$ and $s_n$ affect the number of positive steady states. According to Theorem \ref{thm2.b}, when $a > \dfrac{ \gamma_c s_e}{d_e}$, the system described by equation \eqref{eq:model_1} has at least one positive steady state. However, providing explicit conditions for the existence of more than one positive steady state is challenging. Therefore, we examine the number of positive steady states through numerical simulations. The number of positive steady states is determined by solving the equation defined by \eqref{eq:ps_4aa}. To illustrate our findings, we use different color regions: red for one positive steady state, green for two, and blue for three positive steady states. The results are presented in Fig. \ref{fig:fig1}. 

\begin{figure}[htbp]
	\centering
	\includegraphics[width=9.2 cm]{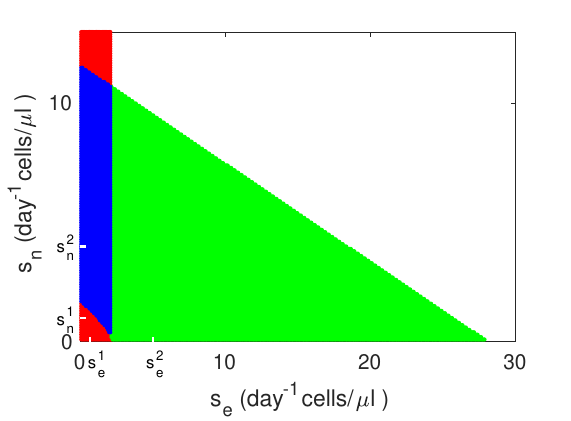}
	\caption{ \textbf{Dependence of the number of positive steady states on changes in parameters $(s_e, s_n)$.}  Different colored regions represent the number of positive steady states: red indicates one steady state, green indicates two, and blue indicates three. The empty area above the green region signifies that there is no positive steady state. The values $s_e^i$ and $s_n^i$ $(i=1,2)$ correspond to those used in Fig. \ref{fig:fig2}(a) and Fig. \ref{fig:fig3}(a). All other parameter values remain consistent with those in Table \ref{tab:table1}. }
	\label{fig:fig1}
\end{figure}

The results presented in Fig. \ref{fig:fig1} indicates that when the value of $s_e$ is small, for instance, $s_e=s_e^1$, equation \eqref{eq:model_1} may yield either one (shown in red) or three (shown in blue) positive steady states, depending on the value of $s_n$. This relationship is illustrated in Fig.\ref{fig:fig2}(a). As $s_e$ increases, for example to $s_e=s_e^2$, the equation may produce either zero or two positive steady states, again depending on the value of $s_n$, as shown in Fig. \ref{fig:fig2}(a). In Fig. \ref{fig:fig2}(b), we see eight trajectories representing the situation when there are two stable steady states. In this scenario, the states corresponding to either the tumor-free equilibrium or the positive equilibrium with a high concentration of cancer cells are stable, reflecting either a healthy or disease condition, as depicted in Fig. \ref{fig:fig2}(b). Furthermore, Fig. \ref{fig:fig2}(b) illustrates eight trajectories originating from different initial values, which converge either to the healthy state (the tumor-free equilibrium) or to the disease state (concentration by a high concentration of cancer cells).

\begin{figure}[htbp]
	\centering
	\includegraphics[width=15 cm]{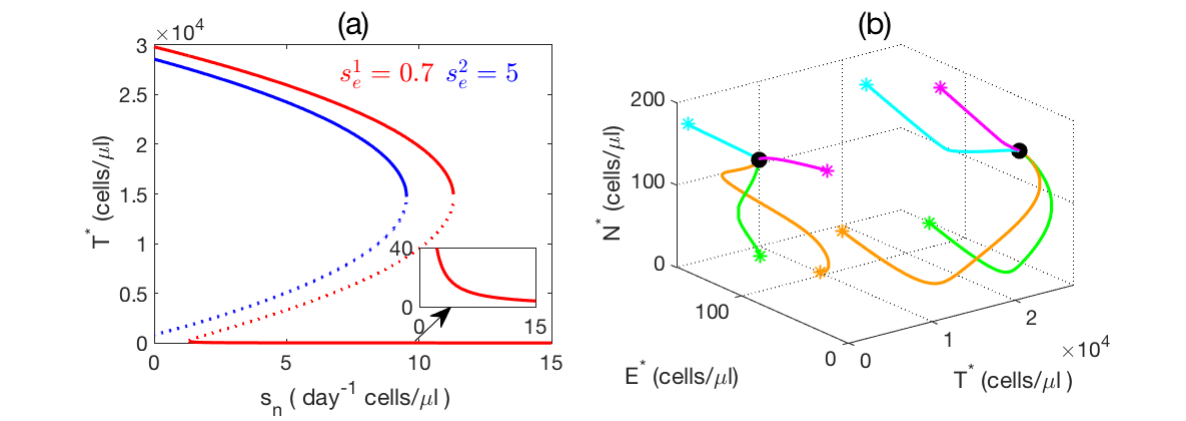}
	\caption{ \textbf{Existence and stability of steady states with changes in $s_n$.}  (a) Bifurcation diagram for positive steady states of the equation \eqref{eq:model_1} illustrating changes in $s_n$ and different values of $s_e$ (with $s_e^1 = 0.7$ and $s_e^2 = 5$). The solid lines indicate stable steady states, while the dotted lines represent unstable steady states. (b) Solutions of the system \eqref{eq:model_1} with different initial conditions, using parameters $(s_e, s_n) = (5, 7)$. The black dots indicate two stable steady states: the left state represents the tumor-free equilibrium, while the right state denotes the disease state. The eight lines illustrate trajectories converging towards either of the two stable steady states, with the corresponding initial conditions marked by stars. All other parameters remain consistent with those in Table \ref{tab:table1}. }
	\label{fig:fig2}
\end{figure}

\begin{figure}[htbp]
	\centering
	\includegraphics[width=15 cm]{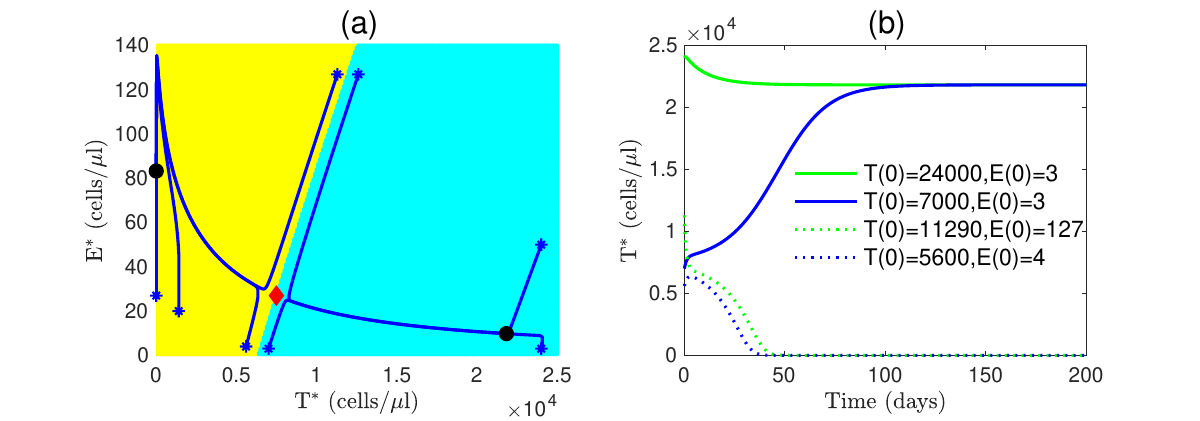}
	\caption{ \textbf{Illustration of the stability of steady states when $(s_e, s_n) = (5, 7)$.} (a) Basins of attraction for the steady states when the trajectory of system \eqref{eq:model_1} starts with an initial value of $N(0)= 170.79$. The red diamond dots represent unstable steady states, while the black dots indicate stable steady states. The eight blue lines illustrate trajectories that converge towards either of the stable steady states, with the corresponding initial states marked by blue stars.  (b) Time courses of tumor cells for the sample trajectories as they approach one of the two stable steady states, starting from different initial conditions. The parameters used are taken from Table \ref{tab:table1}. }
	\label{fig:fig20}
\end{figure}

Next, based on parameter values in Table \ref{tab:table1}, we analyze the attraction basins of different steady states to illustrate the long-term behaviors of three types of cells intuitively. With the initial values fixed at $N(0)= 170.79$, we consider initial values $(T(0), E(0))$ within the ranges $[100, 25000] \times [0.5, 140]$ using a uniform mesh. Understanding the long-term behaviors of the system \eqref{eq:model_1} under different initial conditions is crucial, as this knowledge can help us intervene in the system and protect against disease states. In Fig. \ref{fig:fig20}(a), the yellow region represents the attraction basins of the healthy state (the tumor-free equilibrium), while the cyan region represents the attraction basins of the disease state (characterized by a high concentration of cancer cells). The system \eqref{eq:model_1} can evolve into a healthy state if the initial values $(T(0), E(0) )$ are located in the yellow region. Conversely, if the initial values fall within the cyan region, the system \eqref{eq:model_1} may develop into a disease state.

We also analyze the stability of positive steady states and the two-dimensional phase portraits when the parameter $s_n$ is fixed. The results are illustrated in Fig. \ref{fig:fig3}. As seen in Fig. \ref{fig:fig1}, when the value of $s_n$ is small, for instance, $s_n=s_n^1$, equation \eqref{eq:model_1} may have one (red), two (green), or three (blue) positive steady states, depending on the value of $s_e$, as shown in Fig. \ref{fig:fig3}(a). Further increasing $s_n$, for example, to $s_n=s_n^2$, can result in three (blue), two (green), or zero positive steady states, again depending on the value of $s_e$, as illustrated in Fig. \ref{fig:fig3}(a). The two-dimensional phase portraits of equation \eqref{eq:model_1} are presented in Fig. \ref{fig:fig3}(b) for specific values of $(s_e, s_n) = (1.5, 4)$ where three positive steady states exist. In this scenario, bistable steady states arise, corresponding to high and low concentrations of cancer cells labeled as the disease and healthy states, respectively (Fig. \ref{fig:fig3}(b)). Moreover, Fig. \ref{fig:fig3}(b) illustrates eight trajectories with different initial values. These eight trajectories approach either the healthy state (low concentration of cancer cells) or the disease state (high concentration of cancer cells).

\begin{figure}[htbp]
	\centering
	\includegraphics[width=15 cm]{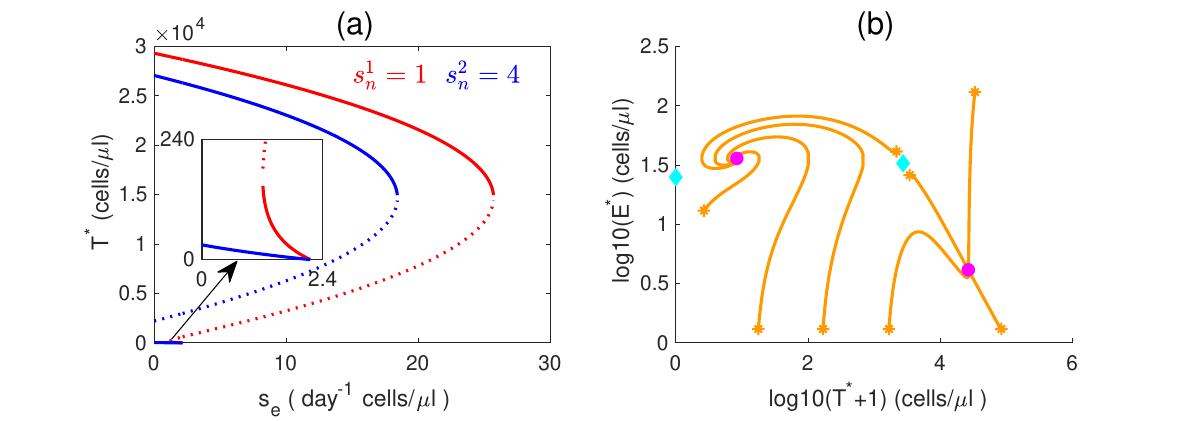}
	\caption{ \textbf{Existence and stability of steady states with changes in the parameter $s_e$.}   (a) Bifurcation diagram for positive steady states of equation \eqref{eq:model_1} with changes in $s_e$ and different values of $s_n$ (specifically, $s_n = 1$ or $s_n = 4$). The solid lines represent stable steady states, while the dotted lines indicate unstable steady states. (b) Phase portraits of the solutions for different initial conditions at $(s_e, s_n) = (1.5, 4)$. The two cyan diamonds represent unstable steady states, whereas the magenta dots indicate two stable steady states. The eight orange lines represent trajectories that converge towards either of the two stable steady states, with the corresponding initial conditions marked by orange stars. All other parameters remain consistent with those in Table \ref{tab:table1}. }
	\label{fig:fig3}
\end{figure}

\subsubsection{Numerical approximation of the function $\tau$}
\label{sec:sec1}

Additionally, setting $(s_e, s_n) = (1.5, 4)$ and using other parameters from Table \ref{tab:table1}, we perform numerical calculations to approximate $\tau(x, y, z)$ as defined in \eqref{eq:ft_1}. To ensure $y \geq \dfrac{s_e}{d_e + \gamma_e/b}$, we set $y \geq 1$, since $ \dfrac{s_e}{d_e + \gamma_e/b} \approx 0.49$. We create a uniform mesh $\{(x_i, y_j), i=1,..., N_1, j=1,..., N_2\}$ over the domain $(x, y) \in [ 1, 101] \times [1, 16]$. With a given threshold $\epsilon (=1)$, we examine the numerical solutions on the interval $[0, \hat{t}]$, where time steps are uniform and represented as $\{t_k, k=1,..., M\}$. The upper limit $\hat{t}$ is calculated as $\dfrac{ \ln \dfrac{x}{\epsilon} } {\dfrac{\gamma_c s_e}{d_e + \gamma_e/b} -a + a b \epsilon }$. For the first time $t_k$ such that $T(t_k) \leq \epsilon$, we take $t_k$ as the approximations for $\tau$. The numerical approximation of $\tau$ is shown in Fig. \ref{fig:tf} and Fig. \ref{fig:tf2}. When the initial concentrations of effector T cells and naive T cells are held constant, $\tau$ increases as the initial concentration of cancer cells rises, as shown in Fig. \ref{fig:tf2}(a). Conversely, when the initial concentrations of cancer cells and naive T cells are fixed, $\tau$ decreases as the initial concentration of effector T cells increases, which is depicted in Fig. \ref{fig:tf2}(b) increases.

\begin{figure}[htbp]
	\centering
	\includegraphics[width=9.2cm]{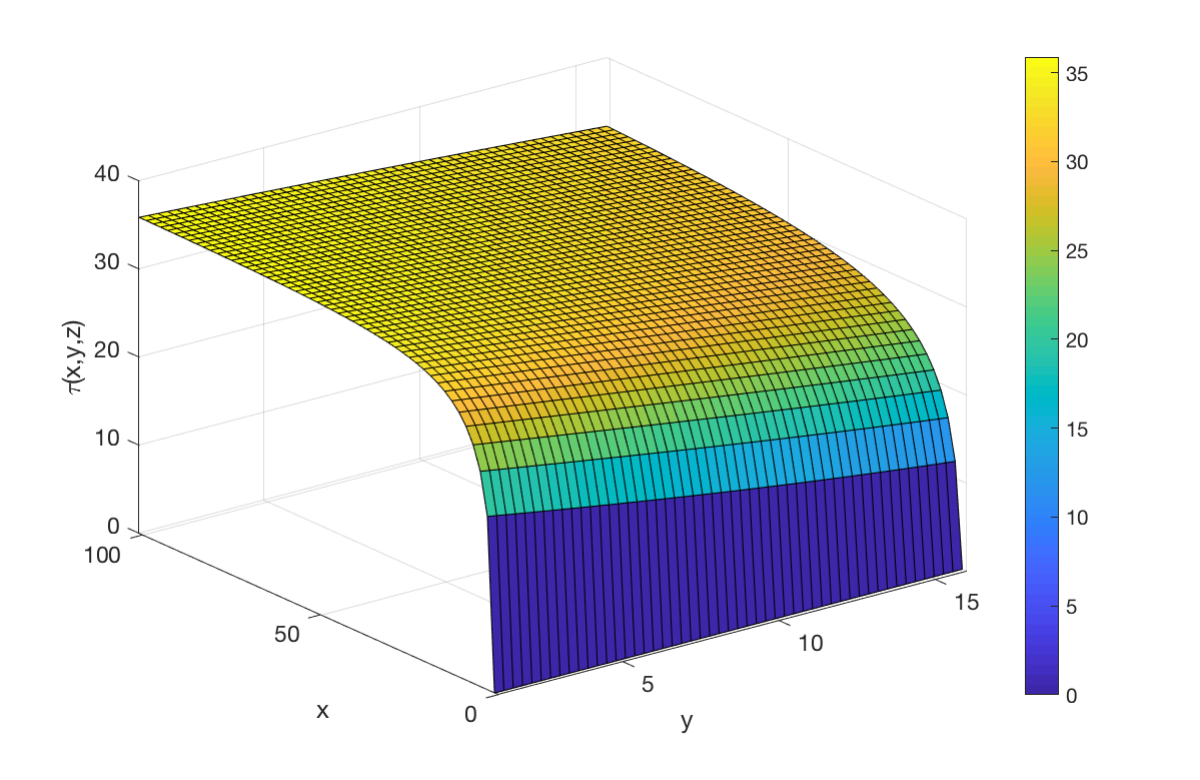}
	\caption{ \textbf{Numerical approximation of $\tau(x, y, z)$ defined in \eqref{eq:ft_1} for $(x, y) \in [ 1, 101] \times [1, 16]$ and $z =25$.} We set $a= 0.002$ while keeping the other parameters the same as those in Fig. \ref{fig:fig0}.}
	\label{fig:tf}
\end{figure}

\begin{figure}[htbp]
	\centering
	\includegraphics[width=15 cm]{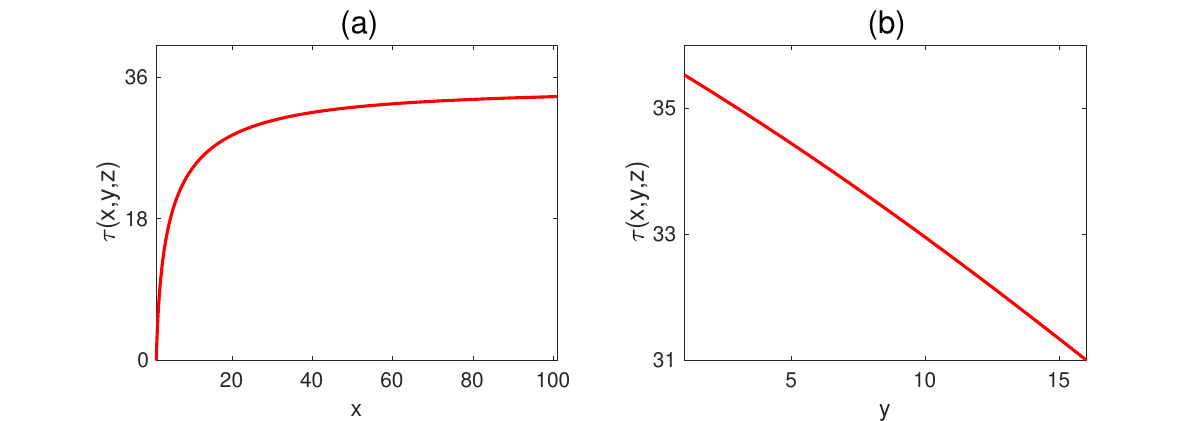}
	\caption{ \textbf{Numerical approximation of $\tau(x, y, z)$ defined in \eqref{eq:ft_1} with respective to $x$ or $y$.}  (a) Numerical approximation of $\tau(x, y, z)$ for $x \in [1, 101]$ with $ y = 11$. (b) Numerical approximation of $\tau(x, y, z)$ for $x = 61$ and $ y \in [1, 16]$. In both cases, we  keep $z =25$ and $a = 0.002$, while the other parameters remain the same as those in Fig. \ref{fig:fig0}.}
	\label{fig:tf2}
\end{figure}

\subsection{Optimal control problem for immunotherapy}

In this section, we consider the optimal treatment strategies for immunotherapy, focusing on the timing and sequence of drug administration.

To describe the immunotherapy protocol, we introduce the time-dependent control terms $u_e(t)$ and $u_n(t)$ into the model and discuss the optimal strategy for adiministering effector and naive T cells into the bloodstream. Thus, the model equation \eqref{eq:model_1} is rewritten as
\begin{equation}
\label{eq:model_1_op}
\left\{
\begin{aligned}
\dfrac{{\rm d} T}{{\rm d} t}& =  a  T ( 1 - b T) - \gamma_c E T,\\
\dfrac{{\rm d} E}{{\rm d} t}& =  u_e(t) s_e - d_e E + \alpha_n k_n N \Big(  \dfrac{T}{ T + g } \Big)  - \gamma_e E T, \\
\dfrac{{\rm d} N}{{\rm d} t}& =  u_n(t) s_n - d_n N - k_n N \Big(  \dfrac{T}{ T + g } \Big).
\end{aligned}
\right.
\end{equation} 
The control functions $u_e(t)$ and $u_n(t)$ are Lebesgue measurable and represent the proportions of the maximum amounts of $s_e$ and $s_n$, respectively. We have the constrains $0 \leq u_e(t) \leq 1$ and $0 \leq u_n(t) \leq 1$ for all $t \geq 0$. 

Let $[0, t_f]$ be the treatment interval; we define the objective functional for the treatment strategy $(u_e(t), u_n(t))$ as
\begin{equation}
\label{eq:3objective_bb}
J(u_e, u_n) = T(t_f) + \int_0^{t_f} \big( a_1 T(t) +  a_2  u_e(t) +  a_3 u_n(t) \big) {\rm d} t.
\end{equation}
The positive constants $a_1$, $a_2$, and $a_3$ are weight factors that balance the four contributions. In the objective functional \eqref{eq:3objective_bb}, $T(t_f)$ represents the concentration of cancer cells at the end of the treatment. The three components in the integrand represent the accumulated concentration of cancer cells during treatment, as well as the costs associated with the treatments $u_e$ and $u_n$ treatments. We aim to minimize $J(u_e, u_n)$ to balance the tumor burden against the treatment costs. 

The optimal control problem can be described as
\begin{equation}
\label{eq:opt1}
\begin{aligned}
&\min_{( u_e, u_n)\in\mathcal{U}} J( u_e, u_n ) = T(t_f) + \int_0^{t_f} \big( a_1 T(t) +  a_2  u_e(t) +  a_3 u_n(t) \big)  {\rm d} t,\\
&\mathrm{subject\ to}\ \  \left\{
\begin{aligned}
\dfrac{{\rm d} T}{{\rm d} t}& =  a T ( 1 - b T) - \gamma_c E T,\\
\dfrac{{\rm d} E}{{\rm d} t}& =  u_e s_e - d_e E +
\alpha_n k_n N \Big(  \dfrac{T}{ T + g } \Big)  - \gamma_e E T,   \\
\dfrac{{\rm d} N}{{\rm d} t}& =  u_n s_n - d_n N - k_n N \Big(  \dfrac{T}{ T + g } \Big), \\
T(0) & = x, E(0) = y, N(0) = z,
\end{aligned}
\right.
\end{aligned}
\end{equation}
with the set of admissible controls
\begin{equation}
\label{eq:2_control_set}
\mathcal{U} = \big \{ (u_e, u_n) \in L^{ \infty}([0, t_f], \mathbb{R}^2 ) \mid 0 \leq u_e(t) \leq 1, 0 \leq u_n(t) \leq 1, \forall t \in [0, t_f] \big \}.
\end{equation}

According to optimal control theory \citep{Fleming1975,Bressan2007,Ledzewicz2012book,Agrachev2004}, there exists a pair $(\tilde{u}_{e}, \tilde{u}_{n})\in \mathcal{U}$ that solves the optimal control problem \eqref{eq:opt1}. By applying Pontryagin's minimum principle (PMP), we can derive the necessary conditions for optimal control solutions.  

The Hamiltonian of the problem \eqref{eq:opt1} is defined as
\begin{equation}
\label{eq:hami}
\begin{aligned}
& H(t, T, E, N, u_e, u_n, \lambda_1, \lambda_2, \lambda_3 )   \\
=\ & a_1 T +  a_2  u_e +  a_3 u_n  + \lambda_1 \big(  a T ( 1 - b T) - \gamma_c E T \big)   \\
&{} + \lambda_2 \left(   u_e s_e - d_e E +
\alpha_n k_n N \Big(  \dfrac{T}{ T + g } \Big)  - \gamma_e E T  \right) \\
&{} + \lambda_3 \left(   u_n s_n - d_n N - k_n N \Big(  \dfrac{T}{ T + g } \Big)   \right).
\end{aligned}
\end{equation}
According to the PMP, if \eqref{eq:opt1} has an optimal control solution $(\tilde{u}_{e}, \tilde{u}_{n})\in \mathcal{U}$, and if $(\tilde{T}, \tilde{E}, \tilde{N})$ is the corresponding solution of \eqref{eq:model_1} over the interval $[0, t_f]$, there exist absolutely continuous adjoint variables $\lambda_i (i = 1, 2, 3)$ that satisfy the following equation:
\begin{equation}
\label{eq:model_1_af}
\left\{
\begin{aligned}
\dfrac{{\rm d} \lambda_1}{{\rm d} t} &= {}\left.- \dfrac{ \partial H}{ \partial T}\right\vert_{(\tilde{T}, \tilde{E}, \tilde{N})}\\
& = {} -a_1 - \lambda_1 (a - 2 a b \tilde{T} - \gamma_c \tilde{E} ) - \lambda_2 \Big( \dfrac{g  \alpha_n k_n \tilde{N} }{ (\tilde{T} + g)^2 } - \gamma_e \tilde{E} \Big)  + \lambda_3  \dfrac{g k_n \tilde{N} }{ (\tilde{T} + g)^2 },\\
\dfrac{{\rm d} \lambda_2}{{\rm d} t} &= {}\left.- \dfrac{ \partial H}{ \partial E}\right\vert_{(\tilde{T}, \tilde{E}, \tilde{N})}\\
& ={} \gamma_c \lambda_1 \tilde{T} + \lambda_2( d_e + \gamma_e \tilde{T}),   \\
\dfrac{{\rm d} \lambda_3}{{\rm d} t} &= {}\left.- \dfrac{ \partial H}{ \partial N}\right\vert_{(\tilde{T}, \tilde{E}, \tilde{N})}\\
& ={} -\lambda_2  \dfrac{  \alpha_n k_n \tilde{T} }{ \tilde{T} + g } + \lambda_3 (d_n + \dfrac{ k_n \tilde{T} }{ \tilde{T} + g } ),
\end{aligned}
\right.
\end{equation} 
with conditions $\lambda_1(t_f) = 1$, $\lambda_2(t_f) = 0$ and $\lambda_3(t_f) = 0$.

Since the Hamiltonian $H$ is linear in the controls $u_e$ and $u_n$, the minimizations of $H$ can be decoupled and be performed separately. We define the \emph{switching functions} $ \phi_1 = a_2 + s_e \lambda_2 $ and $\phi_2 = a_3 + s_n \lambda_3$ according to the coefficients of the controls $u_e$ and $u_n$ in equation \eqref{eq:hami}, respectively. The optimal controls $\tilde{u}_{e}$ and $\tilde{u}_{n}$ are given respectively by
\begin{equation}
\label{eq:3bb}
\tilde{u}_{e}(t) = \left\{
\begin{array}{ll}
0, &\phi_1(t) > 0,\\
1, &\phi_1(t) < 0 \\
\end{array}
\right.
\end{equation}
and 
\begin{equation}
\label{eq:3bb1}
\tilde{u}_{n}(t) = \left\{
\begin{array}{ll}
0, &\phi_2(t) > 0,\\
1, &\phi_2(t) < 0 .\\
\end{array}
\right.
\end{equation}

When the switching function is nonzero, bang–bang is obtained from \eqref{eq:3bb} and \eqref{eq:3bb1}. Bang-bang control suggests an ON-OFF treatment in clinical applications, which implies a drug holiday when $\tilde{u}_{e}(t)  = 0$ and $\tilde{u}_{n}(t) =0$, and a maximum drug dose when $\tilde{u}_{e}(t) =1$ or $\tilde{u}_{n}(t) =1$.

It is important to note that the PMP only provides a necessary condition for solving the optimal control problem. Therefore, the Bang-bang strategy, outlined in \eqref{eq:3bb} and \eqref{eq:3bb1}, is a candidate for the optimal control. However, this strategy is based on the goal of minimizing the Hamiltonian $H$, which in turn minimizes the objective functional $J(u_e, u_n)$ for $(u_e, u_n)\in \mathcal{U}$.

From a biological perspective, the Band-bang strategy suggests a treatment protocol in which the maximum dose is administered to reduce the tumor burden, followed by the drug holiday once the burden has decreased to a low level. This ON-OFF treatment approach is meaningful as it helps to balance the tumor burden with the associated treatment costs.

We can numerically explore the optimal immunotherapeutic protocols using an iterative forward-backward sweep method (FBSM) \citep{LenhartandWorkman2007}. Iinitially, the boundary values for the state variables $(T, E, N)$ and the adjoint variables $(\lambda_1, \lambda_2, \lambda_3)$ are given at $t=0$ and $t=t_f$, respectively. The iterative scheme is implemented following the numerical scheme outlined below. 

First, an initial guess is made for $(u_e, u_n)$. Next, based on equation \eqref{eq:opt1}, the state variables are solved forward in time using the initial conditions $(T(0), E(0), N(0)) = (T_0, E_0, N_0)$ along with $(u_e, u_n)$. Afterward, the adjoint variables are solved backward in time according to the system described in \eqref{eq:model_1_af}, utilizing the state variables $(u_e, u_n)$ and the boundary conditions $(\lambda_1(t_f), \lambda_2(t_f), \lambda_3(t_f)) = (1, 0, 0)$. Finally, the controls are updated based on equations \eqref{eq:3bb} and \eqref{eq:3bb1}, using the new state and adjoint variables. This process continues until the convergence condition is met. Throughout this procedure, we employ a fourth-order Runge-Kutta method with uniform discretization in time to solve the equations \eqref{eq:model_1_op} and \eqref{eq:model_1_af}. 

In the subsequent discussions, the end of the treatment interval is set as $t_f=60 \ {\rm day}$ with a time step of $\Delta t = 0.006$ for the model simulations. The initial conditions are specified as $T(0) = 10000 \ {\rm cells / \mu l}$, $E(0) = 120  \ {\rm cells / \mu l}$, and $N(0) = 140  \ {\rm cells / \mu l}$. The results are presented in Fig. \ref{fig:fig4}, Fig. \ref{fig:fig6}, and Fig. \ref{fig:fig5}. 

\begin{figure}[htbp]
	\centering
	\includegraphics[width=15 cm]{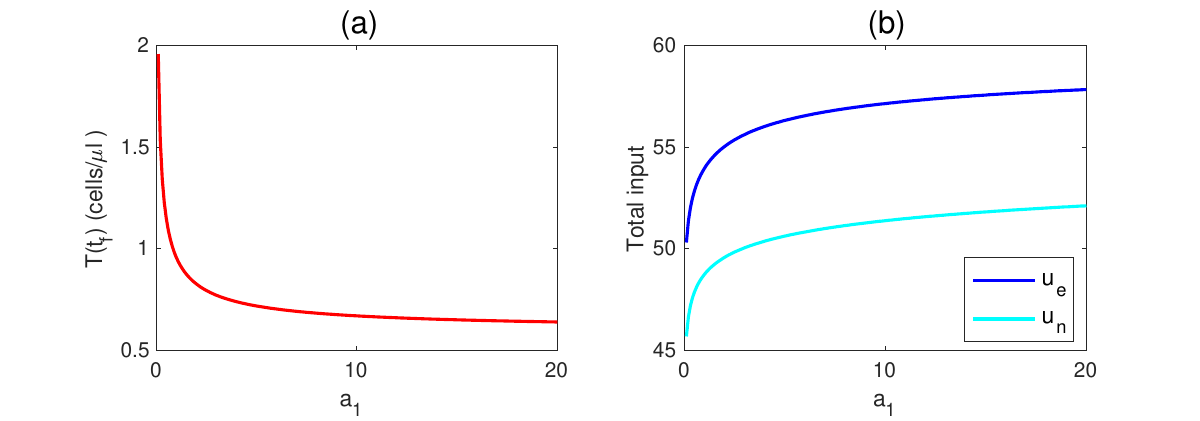}
	\caption{ \textbf{Optimal control solutions for $a_1 \in [0.1, 20]$.} (a) The concentration of cancer cells at $t=t_f$ for various values of $a_1$. (b) Total control amount of control $u_e$ (blue) and $u_n$ (cyan) administered during the treatment interval $[0, t_f]$ for different $a_1$ value. We set $a_2=a_3 = 1$ in these simulations and used the other parameter values provided in Table \ref{tab:table1}. }
	\label{fig:fig4}
\end{figure}

Fig. \ref{fig:fig4} illustrates the concentration of cancer cells $T(t_f)$ at the end of the treatment interval, as well as the cumulative number of the optimal controls $(u_e, u_n)$, with changes in the weight parameter $a_1$. From Fig. \ref{fig:fig4}(a), we observe that the concentration of cancer cells significantly decreases with the increase of $a_1$ when $a_1$ is small. However, as $a_1$ increases and becomes larger, the sensitivity to changes in $a_1$ diminishes. Additionally, the total amount of $u_e$ and $u_n$ increases with $a_1$ (Fig. \ref{fig:fig4}(b)). 

Fig. \ref{fig:fig6} shows the concentration of cancer cells $T(t_f)$ at the end of the treatment interval, along with the cumulative number of the optimal controls $(u_e, u_n)$, as the weight parameter $a_2$ changes (with the assumption that $a_3 = a_2$). In Fig. \ref{fig:fig6}a, it can be seen that the concentration of cancer cells $T(t_f)$ increases with $a_2$, while the total amount of $u_e$ and $u_n$ decreases as $a_2$ increases (see Fig. \ref{fig:fig6}b).

\begin{figure}[htbp]
	\centering
	\includegraphics[width=15 cm]{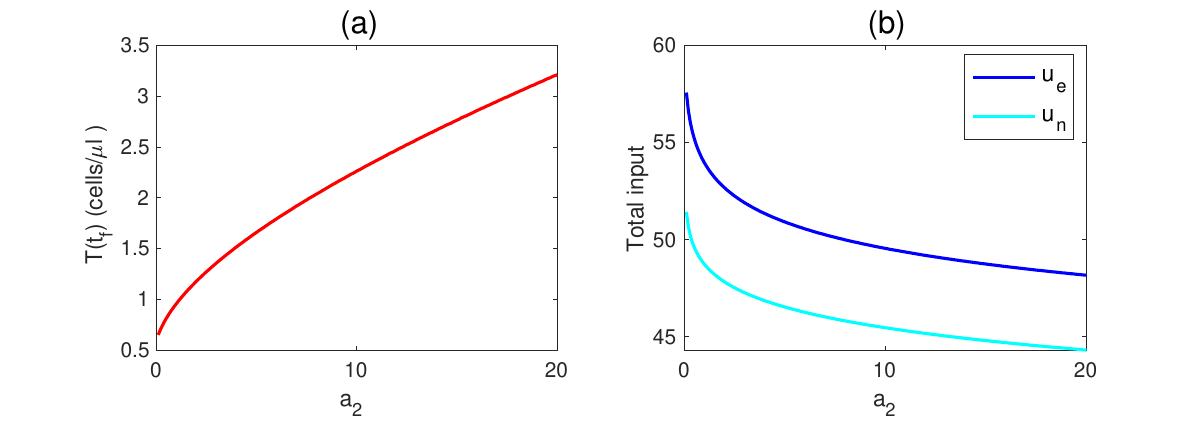}
	\caption{ \textbf{Optimal control dynamics for $a_2$ (with the assumption $a_3 = a_2) \in [0.1, 20]$.}   (a) Time courses of cancer cell concentration for varying values of $a_2$. (b) Time courses of total control input $u_e$ (in blue) and $u_n$ (in cyan) for different values of $a_2$. Here, $a_1$ is set to $1$, and other parameter values are taken from Table \ref{tab:table1}. }
	\label{fig:fig6}
\end{figure}

Fig. \ref{fig:fig5} compares the optimal control method with a constant treatment approach. According to Theorem \ref{thm2.aaa}\rm{(i)}, in the absence of immunotherapy, the concentration of cancer cells will rise to a high level, reaching a steady state solely by cancer cell population. As shown in Fig. \ref{fig:fig5}(a), the optimal treatment policy significantly reduces the cancer cell population compared to the constant treatment policy. Both treatment strategies involve the same total drug dosage, as illustrated in Fig. \ref{fig:fig5}(b). This outcome demonstrates that the optimal treatment policy achieves better results with the same amount of drugs. 

\begin{figure}[htbp]
	\centering
	\includegraphics[width=15 cm]{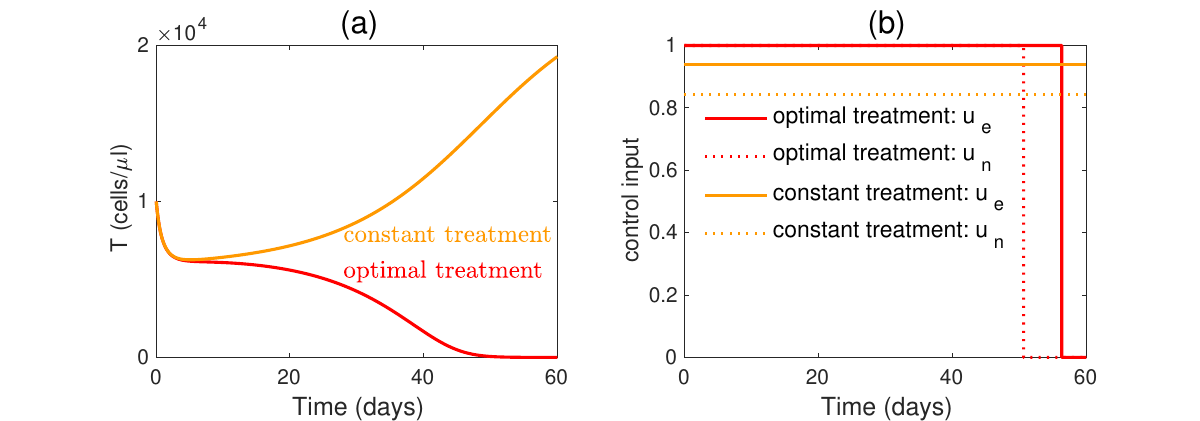}
	\caption{ \textbf{Comparison of optimal and constant treatment dynamics.} (a) Time courses of the cancer cell concentration with optimal (red) and constant (orange) treatment policies.  (b) Time courses of the control strategy $(u_e, u_n)$ with optimal (red) and constant (orange) treatment policies. Here, we set $a_1 = 5$ and $a_2 = a_3 = 1$. Other parameter values are taken from Table \ref{tab:table1}. }
	\label{fig:fig5}
\end{figure}

\section{Discussion}
\label{sec:4}
Understanding the interactions between tumor and immune cells is crucial in the fight against cancer. In this context, a nonlinear differential equation model (equation \eqref{eq:model_1}) proposed by \citep{Talkington2018} is used to explore the interactions between chronic myelogenous leukemia (CML) cells and two different types of immune cells, as well as potential immunotherapy strategies. The positivity and boundaries of the solutions are first examined. The effects of immune cell production rates $s_e$ and $s_n$ on the immune system are then analyzed. Appropriate values for $s_e$ and $s_n$ result in bistable steady states. Furthermore, the stability of positive steady states under different conditions is discussed. Through the model, we can determine the first time $\tau$ at which the cancer cell concentration reaches the desired level as a function of initial values. This function satisfies a partial differential equation with appropriate boundary conditions. The upper bound on the time $\tau$ is estimated based on model parameters.

Our study employed optimal control methods to identify the best timing and sequence for a therapeutic protocol. We analyzed the optimal combination therapy strategies using the Pontryagin Maximum Principle (PMP) approach. Our numerical simulations show that optimal treatment can reduce cancer cell concentrations more effectively than continuous treatment, even when the same treatment interval and total control input are used. We obtain a bang-bang treatment strategy from the optimal control problem, as illustrated in Fig. \ref{fig:fig5}.

This paper focuses on analyzing a conceptual model that explores the dynamic interactions between tumor cells and the immune system. However, model \eqref{eq:model_1} does not incorporate clinical data. Therefore, further work is required to integrate patient-specific information into mathematical models, which could lead to the development of more effective, personalized therapeutic strategies. It is well-established that certain chemotherapies can enhance tumor immunity, and combining immunotherapy with chemotherapy holds great potential for therapeutic synergy in cancer treatment \citep{Emens2015,Lake2005}. Consequently, investigating the combination of immunotherapy and chemotherapy protocols for cancer patients through mathematical modeling would be a valuable avenue for future research.

\section*{Declaration of Competing Interest}
The authors declare there is no conflict of interest.

\section*{Acknowledgments}
This work was funded by the National Natural Science Foundation of China (NSFC 12331018). 

\bibliographystyle{elsarticle-num.bst}
\bibliography{papers.bib}

\end{document}